\DeclareMathOperator*{\argmin}{arg\,min}
\definecolor{LightBlue}{RGB}{140,186,252}
\newcolumntype{a}{>{\columncolor{LightRed}}c}
\def\1{\mathbbm{1}}
\newtheorem{theorem}{Theorem}
\newtheorem{proposition}{Proposition}
\newtheorem{corollary}{Corollary}
\newtheorem{lemma}{Lemma}
\newtheorem{remark}{Remark}
\title{Predicting data value before collection: A coefficient for prioritizing sources under random distribution shift}
\author{Ivy Zhang and Dominik Rothenh\"ausler}
\begin{document}

\maketitle

\begin{abstract}
Researchers often face choices between multiple data sources that differ in quality, cost, and representativeness. Which sources will most improve predictive performance? We study this data prioritization problem under a random distribution shift model, where candidate sources arise from random perturbations to a target population. We propose the Data Usefulness Coefficient (DUC), which predicts the reduction in prediction error from adding a dataset to training, using only covariate summary statistics and no outcome data. We prove that under random shifts, covariate differences between sources are informative about outcome prediction quality. Through theory and experiments on synthetic and real data, we demonstrate that DUC-based selection outperforms alternative strategies, allowing more efficient resource allocation across heterogeneous data sources. The method provides interpretable rankings between candidate datasets and works for any data modality, including ordinal, categorical, and continuous data.\end{abstract}

\section{Introduction}

Consider a practitioner aiming to train a predictive model for a target population but has limited resources for outcome data collection. They have access to covariate summary statistics from multiple candidate sources and must decide which sources to sample for the costly outcome data. This data collection prioritization problem is fundamental across domains: building a prediction model for California using data from other states, clinical trials selecting sites for patient recruitment, or companies deciding which customer segments to survey. Unlike active learning settings where practitioners decide whether to label individual units, we focus on batch source selection where the decision is whether to collect an entire dataset from a candidate source (e.g., setting up an experiment in a location).

One standard machine learning approach ranks candidate sources by distributional similarity of source to the target, using measures such as KL divergence, Wasserstein distance, or maximum mean discrepancy \citep{quinonero2008dataset, ben-david2006analysis, ben-david2010theory, kouw2018introduction, data_addition_dilemma}. However, these proxies have key limitations: they are often based on conservative upper bounds on generalization risk and may not correlate well with actual predictive improvements. Rather than relying on such bounds, we derive from rigorous theory a coefficient that directly predicts the reduction in excess risk from collecting outcome data from each candidate source, addressing the core question: which additional source will most improve predictive performance on the target population?

To illustrate our approach, suppose we are in 2017 and wish to predict cholesterol of U.S. adults. Outcome data collection is expensive but historical covariate data are readily available. As we demonstrate in Section \ref{sec: nhanes_pred}, a Kullback-Leibler divergence based approach and a domain classifier based approach would select D1 (dataset 1), a mixture of 2013 and 2015 observations, to use for our prediction task. However, D1 provides less predictive value for 2017 cholesterol prediction in reality. Our method, while not choosing the highest ranked dataset, selects D5 as seen in Figure \ref{fig:intro}, the next best option, and de-prioritizes weaker candidates using only covariate summary statistics by directly predicting risk reduction.

\begin{figure}[ht]
    \centering
    \includegraphics[width=0.4\linewidth]{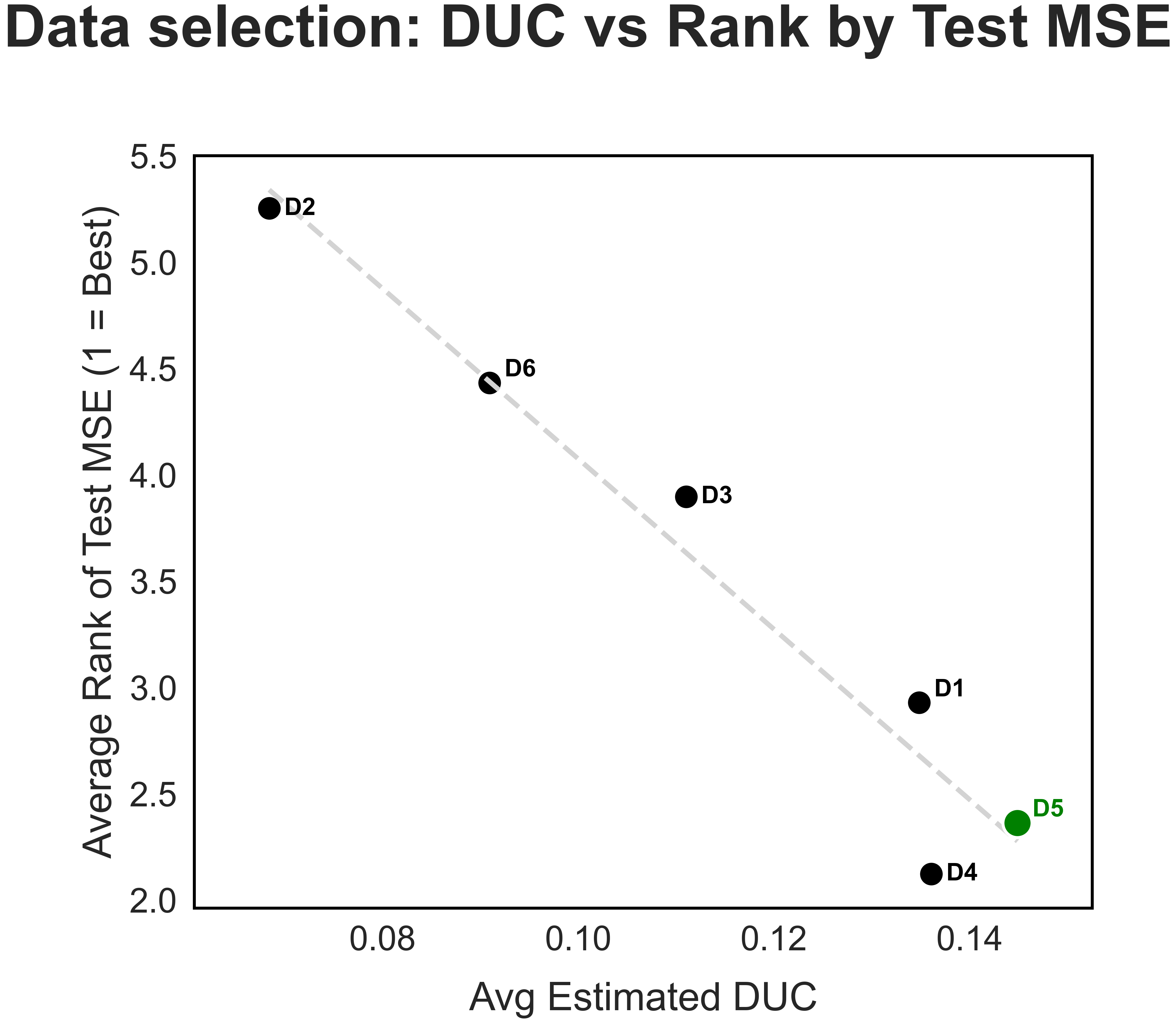}
        \caption{Relationship between our method, the Data Usefulness Coefficient (DUC; estimated using only covariate summary statistics) and the empirical rank (by test MSE) of a source dataset. Higher DUC corresponds to a better rank (1 = lowest test MSE). Results averaged over 1,000 trials. While our method missed D4 as the best ranking dataset, other competing methods select D1 as seen in Section \ref{sec: nhanes_pred}.}
    \label{fig:intro}
\end{figure}

\paragraph{Our contribution.} We introduce the \textit{Data Usefulness Coefficient} (DUC), which solves a fundamental challenge in data prioritization: predicting which datasets will improve predictive performance before collecting expensive outcome data. The DUC has several advantages that distinguish it from existing approaches:

\begin{itemize}
\item \textbf{Predicting risk reduction}: Unlike methods that yield conservative upper bounds, DUC directly predicts percentage excess risk reduction on an interpretable 0-1 scale.
\item \textbf{Minimal data requirements}: DUC requires only covariate summary statistics, not outcome data, enabling cost-effective data prioritization. Statistically, we leverage the fact that under random shift, covariate shift is predictive of shifts in the outcome conditional on covariates $( Y | X)$ \citep{random_shift3}.
\item \textbf{Statistical rigor}: We provide a consistent estimator with asymptotically valid confidence interval, quantifying both distributional and sampling variation.
\end{itemize}

\paragraph{Paper outline.} Section~\ref{sec: setting} introduces the random distribution shift framework that provides the foundation for our theoretical analysis. Related work is discussed in Section~\ref{sec:related-work}. Section~\ref{sec: partial_cor} presents our main theoretical results connecting data usefulness to partial correlations and derives the DUC. Section~\ref{sec: estimation} provides a consistent estimator for the DUC using only covariate summary statistics and establishes its asymptotic properties. Section~\ref{sec: empirical} validates our approach through simulations and real-world applications, including the 2017 cholesterol prediction task. Section~\ref{sec: discussion} discusses limitations and future directions. In addition, Appendix \ref{sec: optimal_samp} presents a method for optimal sampling under budget and size constraints, complementing the main results on data collection strategies within the random shift, empirical risk minimization framework.

\paragraph{Why we employ a random distributional shift model.}
To predict data value before collection, we need a framework that connects observable covariate information to unobserved prediction performance. Standard covariate shift models, where $P(Y|X)$ remains constant across sources, cannot address this challenge. If the conditional distribution is identical everywhere, covariate statistics reveal nothing about outcome prediction quality. Real-world shifts are more complex: for example, U.S. adult's cholesterol level distribution changes over time due to socioeconomic shifts, pharmacological advancements, and policy changes that simultaneously affect both who is in the population ($P(X)$) and the relationship between covariates and outcomes ($P(Y|X)$).
We therefore adopt a random distribution shift framework that models these multifaceted changes through random perturbations affecting both marginal and conditional distributions. This framework couples covariate and label shifts through shared distributional weights, enabling outcome-free prediction of data usefulness. We now formalize this framework.

\section{Setting}\label{sec: setting}

How should we model the relationship between candidate data sources and a target population? Many existing approaches assume covariate shift \citep{Shimodaira2000}, where only the marginal distribution $P(X)$ changes while $P(Y|X)$ remains fixed. However, this assumption is often violated in tabular datasets \citep{liu2023need}. Our approach builds on a random distribution shift model that captures complex, multifaceted distribution changes through simultaneous perturbations to both $P(X)$ and $P(Y|X)$. Distribution shift models often place structure on the likelihood ratios between target and source distributions such as in the covariate shift or domain-invariant representation learning literature \citep{Shimodaira2000, Ganin2015}. The random shift model described below instead assumes that the likelihood ratio between the target and source distributions are random.

To illustrate, first consider the U.S. adult cholesterol prediction example over time. Between 1999 and 2017, U.S. adult's cholesterol level distribution shifts due to numerous factors: lifestyle trends (e.g. smoking, diets, physical activity), demographic transitions (migration patterns), and policy changes (increased access to statins). Rather than stemming from a single systematic change, this shift arises from many small, independent perturbations across different population subgroups. Some demographic segments may see cholesterol level increases, and others decreases, in ways that are difficult to predict individually but collectively create complex distribution changes.

The random perturbation framework captures this phenomenon by modeling each source distribution $P_k$ as a random reweighting of different regions of an underlying population distribution $P_f$. Under this model, different parts of the data space are randomly assigned higher or lower probability mass compared to the base distribution, reflecting the intricate nature of real-world distribution shifts. 

We now formalize this intuition mathematically. Suppose we have access to some data from $P_1$ (target distribution) and $P_2,\ldots,P_{K}$ (source distributions) for training a model to predict an outcome $y^{(1)}$ based on covariates $X^{(1)}$, both drawn from the target population $P_1$, and covariates $X^{(2)}, \ldots, X^{(K)}$ drawn from the source distributions. We observe data $(X^{(k)}, y^{(k)}) \in \mathbb{R}^L \times \mathbb{R}$ from $n_k$ samples drawn i.i.d.\ from $P_k$, $k=1,\ldots,K$, and covariate-only data $X^{(K+1)}$ from a candidate source distribution $P_{K+1}$. The question at hand (which we will describe formally in the following section) is whether we should collect $n_{K+1}$ labeled observations $(X^{(K+1)}, y^{(K+1)}) \in \mathbb{R}^L \times \mathbb{R}$ from the candidate source distribution $P_{K+1}$. We are interested in answering this question based on only covariate data.

Each source distribution $P_k$ is assumed to be a random perturbation of a fixed underlying distribution $P_f$, as in the random distributional perturbation setting of \cite{random_shift1}.  The data space is partitioned into equal-probability regions $(I_j)_{j=1,\ldots,m}$, each assigned a random weight $W_j^{(k)}$ that determines how much probability mass that region receives under distribution $P_k$, that is
\begin{equation*}
    P_{k}(x,y) = \frac{W_j^{(k)}}{\frac{1}{m} \sum_{j'=1}^m W_{j'}^{(k)}} P_f(x,y) \qquad \text{ for } (x,y) \in  I_j.
\end{equation*}
The weights $W_j^{(\bullet)}$, $j=1,\ldots,m$ are i.i.d.\ positive random variables with mean one, capturing the idea that shifts are numerous, independent, and unbiased on average. 

\begin{figure}[ht]
    \centering
    \includegraphics[width=0.75\linewidth]{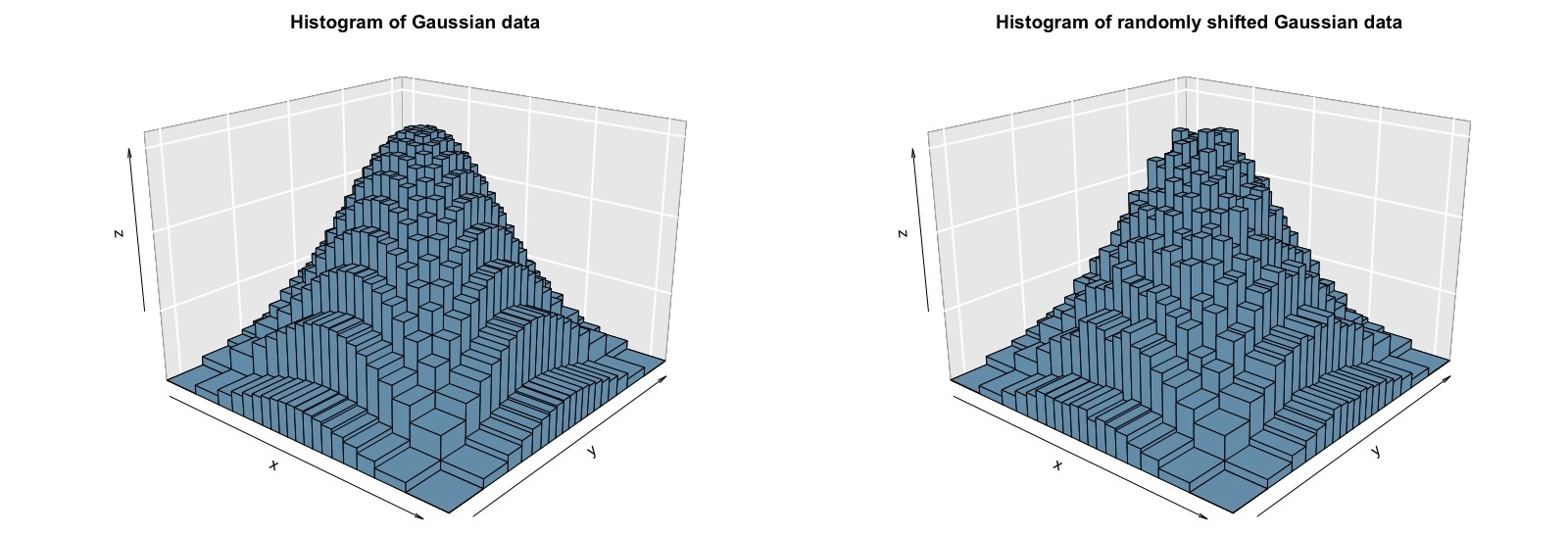}
        \caption{Visualization of the random shift model. The fixed distribution (left) is partitioned into regions $I_j$. The perturbed source distribution $P_k$ (right) is formed by reweighting each region with i.i.d random weights $W_j^{(k)}$. }
    \label{fig: random_shift}
\end{figure}

The asymptotic regime assumes that distributional uncertainty (governed by the number of regions $m$) and sampling uncertainty (governed by sample sizes $n_k$) are of the same order, ensuring that the trade-off between data quality and quantity is meaningful. Formally, we consider sequences $n_k(m)$ as $m \rightarrow \infty$ such that $n_k(m)/m \rightarrow c_k$, a constant that  measures the ratio between sampling and distributional uncertainty. In the following, for simplicity, we notationally drop the dependence of $n_k$ on $m$.

This framework allows both $P(X)$ and $P(Y|X)$ to change simultaneously across sources while maintaining tractable analysis. Empirical evidence supporting this modeling approach across diverse real-world datasets is provided by \cite{random_shift3}. Technical construction details, including regularity assumptions, are provided in Appendix \ref{app: random_shift_construction}.

\begin{remark}[Random shift enables outcome-free estimation of risk reduction]
The same distributional weights $W^{(k)}$ govern both covariate and label distributions. As we will discuss, this coupling implies that covariate shift predicts label shift, enabling us to estimate excess risk reduction from covariate data alone without observing outcomes from candidate sources. This property is unique to the random shift framework and does not hold under standard covariate shift assumptions.
\end{remark}

Our goal is to predict $y^{(1)}$ based on $X^{(1)}$ using data from the related but different distributions $P_k$, addressing the core question: which sources should we prioritize for outcome data collection before observing any outcomes? Having established the distributional framework, we now turn to the central methodological contributions of this work.

\paragraph{Notation.} 
Table~\ref{tab:notation} summarizes key notation used throughout the paper. Let $E_k$, $\mathrm{Var}_k$ be the expectation and variance under $P_k$ and $\hat{E}_{k}$ be a sample mean, e.g. for any function $f: \mathcal{D} \rightarrow \mathbbm{R}$, $\hat{E}_k[f(D^{(k)})] = \frac{1}{n_k}\sum_{i=1}^{n_k} f(D^{(k)}_i)$ where $D^{(k)}_i = (X_i^{(k)}, y_i^{(k)})$. Denote $E$ as the expectation with respect to both distributional and sampling noise.

\begin{table}[h]
\centering
\caption{Notation used throughout the paper.}
\label{tab:notation}
\begin{tabular}{@{}ll@{}}
\toprule
\textbf{Notation} & \textbf{Description} \\ \midrule
$P_1, P_k, P_{K+1}$ & Target, source $k$, and candidate distribution (all random) \\
$n_k$  & Sample size from distribution $P_k$ \\
$(X^{(k)}, y^{(k)})$ & Covariate-outcome pair from distribution $P_k$ \\
$m$ & Number of regions in data space partition \\
$W_\bullet^{(k)}$ & Distributional weight vector for source $k$ \\
$c_k = \lim m/n_k$ & Ratio of distributional to sampling uncertainty \\
$\beta^*, \beta'$ & Optimal weights for approximating $W^{(1)}, W^{(K+1)}$ \\
$\mathcal{E}^K$ & Excess risk using $K$ sources \\
$\rho^2_{K+1|1,\ldots,K}$ & Data usefulness coefficient (DUC) \\
$E_k[\cdot], \hat{E}_k[\cdot]$ & Expectation under $P_k$ and sample mean from source $k$ \\
\bottomrule
\end{tabular}
\end{table}

\section{Related work}\label{sec:related-work}

Our work connects to several research areas: the random distribution shift framework that underlies our theoretical results, distance measure-based methods for domain adaptation that represent our main competitors, data valuation methods that require outcomes, and active learning approaches that select individual observations rather than sources.

\paragraph{Random distribution shift.}
The random distribution shift framework was introduced by \citet{random_shift1}. \citet{random_shift3} show that random distribution shift implies a ``predictive role of covariate shift" and validate this finding on 680 studies across 65 sites. The random shift model was extended to out-of-distribution prediction via reweighting by \citet{random_shift2}. These prior works focus on combining data from fixed, available sources. Our work addresses a distinct, earlier-stage question: before collecting expensive outcome data, which sources should we prioritize for data collection? We develop the first outcome-free criterion for source prioritization under random distributional perturbations.

\paragraph{Divergence-based selection and domain adaptation.}
A common approach to source selection uses distributional distance measures. 
Domain adaptation theory provides generalization bounds relating target risk to distance measures between domains \citep{ben-david2006analysis,ben-david2010theory}, with extensions to multiple sources \citep{mansour2009domain, cortes2011sample}. Surveys synthesize these developments \citep{kouw2018introduction, wilson2020survey}.

Most closely related is \cite{data_addition_dilemma}, which proposes heuristic selection strategies based on distance measures including domain classifier scores. However, existing distance-based approaches have key limitations: they usually rely on upper bounds rather than expected performance, provide only ordinal rankings without interpretable magnitudes. Our DUC directly predicts percentage excess risk reduction on an interpretable 0-1 scale, accounts for both covariate and label shift through the random shift framework, and provides theoretical guarantees on expected performance rather than worst-case bounds.

\paragraph{Data valuation and selection with outcomes.}
Outcome-aware methods quantify the marginal predictive contribution of data points or datasets using influence functions \citep{koh2017understanding}, Shapley values \citep{ghorbani2019data, jia2019towards}, and robust variants \citep{kwon2021beta}. These approaches require both outcomes and a fitted model to assess data value. Empirically, adding datasets can hurt performance via spurious correlations \citep{compton2023more}. Our approach is fundamentally different: it is model-free (applicable to any smooth, strictly convex loss function and prediction task) and outcome-free (uses only covariate summary statistics). This enables ex ante prioritization of sources before the costly step of outcome collection, addressing a distinct stage of the data acquisition pipeline.

\paragraph{Active learning and optimal design.}
Active learning selects which observations to label to improve a predictor \citep{settles2009active, tong2001support, schohn2000less, gal2017deep, ash2019deep}, while active statistical inference optimize label efficiency for inference targets \citep{active_inf}. Classical optimal design \citep{fedorov1972theory, atkinson2007optimum, chaloner1995bayesian} and model-assisted survey sampling \citep{sarndal2003model, lohr2021sampling} optimize data collection. These approaches select from a pool of available unlabeled observations and typically assume covariate shift only, with fixed $P(Y|X)$. Our problem is complementary: we prioritize which sources to sample based on covariate data and allow for simultaneous shift in both $P(X)$ and $P(Y|X)$.

\section{Measuring data usefulness for empirical risk minimization}\label{sec: partial_cor}

With the random shift framework in place, we now address the core challenge of quantifying data usefulness. We introduce the ``\emph{data usefulness coefficient}" (DUC) to quantify the value of an additional dataset. We will show that the covariance structure of the distributional weights determines the reduction in excess risk, independent of the specific loss function or prediction task. In Section \ref{sec: estimation}, we show how the DUC can be estimated from covariate data alone without requiring any outcome information. 

To measure how much new data helps, we first establish how we will combine data from multiple sources. We use weighted empirical risk minimization, which finds optimal convex combinations of datasets for prediction under distribution shift.

Our goal is to estimate the target parameter $\theta_1 = \argmin_\theta E_1[\ell(\theta,X^{(1)},y^{(1)})]$, where the expectation is over the (random) target distribution $P_1$. Note that since $P_1$ is random, $\theta_1$ is random, too. Given data from $K$ sources, we consider a weighted estimator that combines information across sources:
\begin{equation*}
\hat \theta^K(\beta) = \arg \min_\theta \sum_{k=1}^K \beta_k \frac{1}{n_k} \sum_{i=1}^{n_k} \ell(\theta,X_i^{(k)}, y_i^{(k)}),
\end{equation*}
where $\beta_k \geq 0$ are weights summing to one that determine how much to rely on each source. The weights $\beta^* = \arg \min_{\beta^\intercal 1 = 1} \mathbb{E}[(W^{(1)} - \sum_k \beta_{k=1}^K W^{(k)})^2 ] + \sum_k c_k \beta_k^2$ minimize the asymptotic risk and can be estimated using cross-validation or explicit formulas from \citet{random_shift2}.

To quantify performance, we measure the excess risk of our estimator relative to the optimal target parameter. Under the random shift framework, this requires accounting for both sampling uncertainty (from finite samples) and distributional uncertainty (from random perturbations). We define the clipped excess risk as:
\begin{equation*}
    \mathcal{E}^{K} \coloneqq E[\text{CLIP}_{G/m}( R_1(\hat{\theta}^K(\beta^*))-R_1(\theta_1))],
\end{equation*}
where $R_1(\theta) = E_1[\ell(\theta, X^{(1)},y^{(1)})]$ is the (random) population risk on the target, and the function $\text{CLIP}_{G/m}(x) \allowbreak = \max(\min(x,G/m),-G/m)$ bounds the excess risk. 

Now, consider incorporating an additional $(K+1)$-th dataset into training. Our main theoretical result shows that the reduction in excess risk depends only on the covariance structure of the random weights and sample sizes, not on the specific choice of loss function (e.g., squared error, Gamma, Poisson, other smooth strictly convex losses). The proof can be found in the Appendix.

\begin{theorem}[Data usefulness coefficient predicts improvement in excess risk]\label{thm: pred_rho}
Under standard regularity conditions on the loss function (including smoothness, strict convexity; detailed in Appendix~\ref{app: thm1_conditions}) and the random shift model, the relative improvement in excess risk from adding the $(K+1)$-th source is:
\begin{equation*}
 \lim_{G \rightarrow \infty} \lim_{m,n_k \rightarrow \infty}  \frac{\mathcal{E}^{K} - \mathcal{E}^{K+1}}{\mathcal{E}^{K}}  = \rho_{K+1|1,2,\ldots,K}^2,
\end{equation*}
where the \textit{data usefulness coefficient} is defined as:
\begin{equation}\label{eqn: rho_defn}
    \rho^2_{K+1|1, \ldots,K} = \frac{(\mathrm{Cov}( W^{(1)} -  W^{\beta^*}, W^{(K+1)} - W^{\beta'}) +  \langle \beta^*, \beta'\rangle_C )^2 }{ (\mathrm{Var}(W^{(1)} - W^{\beta^*}) + \|\beta^*\|^2_C )  ( \mathrm{Var}(W^{(K+1)} - W^{\beta'}) + \|\beta'\|^2_C + c_{K+1} ) }.
\end{equation}
Here, $\beta^* = \arg \min_{\beta^\intercal 1 = 1} \mathbb{E}[(W^{(1)} - W^{\beta})^2 ] + \|\beta\|_{C}^{2}$, $\beta'= \arg \min_{\beta^\intercal 1 = 1} \mathbb{E}[(W^{(K+1)} - W^{\beta})^2 ] + \|\beta\|_{C}^{2} $, and $W^{\beta} = \sum_{k=1}^K \beta_k W^{(k)}$. $\langle \beta^*,\beta' \rangle_{C} = \sum_{k=1}^K \beta_k^*\beta_k' c_k$ and $\|\beta\|_{C}^{2} 
= \sum_{k=1}^{K} \beta_k^{2}c_k$ are the $c_k$-weighted inner products and squared norms respectively, and  $c_k = \lim \frac{m}{n_k}$ are the sample size to distribution shift ratios. Note that $\rho^2 \in [0,1]$ directly measures the fractional reduction in excess risk: higher values indicate more useful datasets.
\end{theorem}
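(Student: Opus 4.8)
The plan is to reduce the claim to a finite-dimensional partial-correlation identity in three stages: an $M$-estimation expansion that turns the clipped excess risk into a quadratic form in the estimation error; a computation of the limiting expected quadratic form, which factors into a loss-dependent scalar times a weight-dependent scalar so that the loss factor cancels in the ratio; and a Hilbert-space projection argument identifying the surviving ratio with $\rho^2_{K+1|1,\ldots,K}$. Concretely, I would set $\psi=\nabla_\theta\ell$, let $\theta_f$ minimize the base risk $R_f$, and put $H_f=\nabla^2 R_f(\theta_f)$ and $\Sigma_\psi = E_f[\psi(\theta_f)\psi(\theta_f)^\intercal]$. Since $P_k$ reweights region $I_j$ by $W_j^{(k)}$ and $E_f[\psi(\theta_f)]=0$, the population score satisfies $E_k[\psi(\theta_f)]\approx \frac1m\sum_j (W_j^{(k)}-1)\bar\psi_j$ with $\bar\psi_j=E_f[\psi(\theta_f)\mid I_j]$; rescaled by $\sqrt m$ this is a triangular-array sum $Z^{(k)}$ that, under the stated regularity and the prior random-shift CLT, is asymptotically Gaussian with cross-covariances $\mathrm{Cov}(W^{(k)},W^{(l)})\,\Sigma_\psi$ (using $\frac1m\sum_j\bar\psi_j\bar\psi_j^\intercal\to\Sigma_\psi$ and that the weight vectors are i.i.d.\ across regions but coupled across sources). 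The sampling error $(\hat{E}_k-E_k)[\psi(\theta_f)]$ contributes an independent piece $\sqrt{c_k}\,S^{(k)}/\sqrt m$ with $S^{(k)}$ i.i.d.\ of covariance $\Sigma_\psi$, using $n_k/m\to 1/c_k$. Expanding the estimating equation gives $\hat\theta^K(\beta^*)-\theta_1\approx -H_f^{-1}m^{-1/2}\Delta^K$, where $\Delta^K=\big(\sum_k\beta_k^* Z^{(k)}-Z^{(1)}\big)+\sum_k\beta_k^*\sqrt{c_k}\,S^{(k)}$, the random target contributing $\theta_1-\theta_f\approx -H_f^{-1}m^{-1/2}Z^{(1)}$ with distributional but no sampling noise.

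Next, a second-order expansion of $R_1$ around its random minimizer $\theta_1$, with $\nabla^2 R_1(\theta_1)\to H_f$, yields $m\big(R_1(\hat\theta^K(\beta^*))-R_1(\theta_1)\big)\Rightarrow \tfrac12(\Delta^K)^\intercal H_f^{-1}\Delta^K$; the clipping at $G/m$ followed by $G\to\infty$ is precisely what converts this distributional convergence into convergence of the clipped means, so $m\,\mathcal{E}^K\to\tfrac12\,\mathrm{tr}\!\big(H_f^{-1}\mathrm{Cov}(\Delta^K)\big)$. Because the distributional and sampling parts are independent and each covariance is a scalar multiple of $\Sigma_\psi$, I obtain $\mathrm{Cov}(\Delta^K)=\big(\mathrm{Var}(W^{(1)}-W^{\beta^*})+\|\beta^*\|_C^2\big)\Sigma_\psi$, hence $m\,\mathcal{E}^K\to \tfrac12\mathrm{tr}(H_f^{-1}\Sigma_\psi)\cdot\mathcal{R}_K$ with $\mathcal{R}_K:=\mathrm{Var}(W^{(1)}-W^{\beta^*})+\|\beta^*\|_C^2$. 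The same computation for $K+1$ sources produces the identical loss factor $\tfrac12\mathrm{tr}(H_f^{-1}\Sigma_\psi)$ times $\mathcal{R}_{K+1}$, so in $(\mathcal{E}^K-\mathcal{E}^{K+1})/\mathcal{E}^K$ the loss- and task-dependent factor cancels and only $(\mathcal{R}_K-\mathcal{R}_{K+1})/\mathcal{R}_K$ remains; this is exactly the mechanism behind loss-independence.

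Finally I would realize $\mathcal{R}_K$ as a constrained-regression residual norm. Introduce mean-zero unit-variance $\xi_1,\ldots,\xi_{K+1}$ independent of the weights and of each other, and augmented predictors $V^{(k)}=W^{(k)}+\sqrt{c_k}\,\xi_k$; then the $\xi$-penalty reproduces $\|\beta\|_C^2$, so $\mathcal{R}_K=\min_{\sum_{k\le K}\beta_k=1}\mathbb{E}\big[(W^{(1)}-\sum_{k\le K}\beta_k V^{(k)})^2\big]$ and $\beta'$ is the analogous constrained projection of $V^{(K+1)}$. Setting $r_1=W^{(1)}-\sum_k\beta_k^*V^{(k)}$ and $r_{K+1}=V^{(K+1)}-\sum_k\beta_k'V^{(k)}$ and expanding, a direct check gives $\mathrm{Cov}(r_1,r_{K+1})=\mathrm{Cov}(W^{(1)}-W^{\beta^*},W^{(K+1)}-W^{\beta'})+\langle\beta^*,\beta'\rangle_C$, $\mathrm{Var}(r_1)=\mathcal{R}_K$, and $\mathrm{Var}(r_{K+1})=\mathrm{Var}(W^{(K+1)}-W^{\beta'})+\|\beta'\|_C^2+c_{K+1}$ --- exactly the three ingredients of $\rho^2$. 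To finish I would prove the constrained partial-correlation identity $(\mathcal{R}_K-\mathcal{R}_{K+1})/\mathcal{R}_K=\mathrm{Cov}(r_1,r_{K+1})^2/(\mathrm{Var}(r_1)\mathrm{Var}(r_{K+1}))$ by sequential projection onto nested affine subspaces: with $\mathcal{L}_K=\mathrm{span}\{V^{(k)}-V^{(1)}:k\le K\}$, the constrained optimality of $\beta'$ forces $r_{K+1}\perp\mathcal{L}_K$, so $r_{K+1}$ is exactly the component of the newly added direction orthogonal to $\mathcal{L}_K$; projecting the residual $r_1$ onto $\mathrm{span}\{r_{K+1}\}$ then lowers its squared norm by $\langle r_1,r_{K+1}\rangle^2/\|r_{K+1}\|^2$, which is the stated identity.

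The step I expect to be the main obstacle is the second stage: rigorously justifying the iterated limit --- that the expectation of $m\cdot\mathrm{CLIP}_{G/m}(\cdot)$ converges (first $m\to\infty$, then $G\to\infty$) to the expected limiting quadratic form --- together with the joint CLT for distributional and sampling noise in the coupled regime $n_k/m\to 1/c_k$ and uniform control of the $M$-estimation remainder and of the random Hessian $\nabla^2 R_1(\theta_1)$. The clipping is precisely the device supplying the uniform integrability that upgrades convergence in distribution to convergence of means; everything after the cancellation is deterministic linear algebra.
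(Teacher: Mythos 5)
Your proposal follows essentially the same route as the paper's proof: the same M-estimation Taylor expansion and joint distributional-plus-sampling CLT, the same clipping/dominated-convergence argument yielding $m\,\mathcal{E}^K \to \tfrac12\mathrm{tr}(H_f^{-1}\Sigma_\psi)\cdot \mathcal{R}_K$ so that the loss-dependent factor cancels in the ratio, and the same auxiliary-variable augmentation $V^{(k)} = W^{(k)} + \sqrt{c_k}\,\xi_k$ followed by a Frisch--Waugh--Lovell-type sequential projection giving the partial-correlation identity with exactly the three variance/covariance ingredients of $\rho^2$. The only cosmetic difference is that the paper takes the auxiliary variables to be i.i.d.\ standard Gaussians $L^{(k)}$ rather than general mean-zero unit-variance variables, which is immaterial since only second moments enter.
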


\begin{remark}
     Although $c_k$ depends on $m$, we do not estimate $m$ itself. Instead, we estimate the joint quantity $\mathrm{Cov}(W^{(k)}, W^{(k')})/m$ directly from covariate summaries, as we will demonstrate in Section~\ref{sec: estimation} for DUC estimation.
\end{remark}
\begin{remark}
    The clipping introduced in the excess risk $\mathcal{E}^{K}$ ensures bounded functionals and well-defined expectations under weak convergence. The nested limits in Theorem~\ref{thm: pred_rho} reflect this: first $m, n_k \to\infty$ at fixed $G$, then $G\to\infty$.  
\end{remark}
Intuitively, the DUC formula captures how sample sizes and distributional structure jointly determine candidate source value:
\begin{itemize}
    \item $\mathrm{Cov}(W^{(1)}-W^{\beta^*}, W^{(K+1)}-W^{\beta'})$ captures the \textit{distributional co-variation} between target $P_1$ and candidate $P_{K+1}$ not already captured by existing sources $P_2, \ldots, P_K$.
    \item $\mathrm{Var}(W^{(1)} - W^{\beta^*})$ and $\mathrm{Var}(W^{(K+1)} - W^{\beta'})$ capture the \textit{residual distributional variation} after accounting for $P_2, \ldots, P_K$.
    \item $c_k$ for $k=1,\ldots, K$ are inversely proportional to current sample sizes: smaller existing samples (larger $c_k$) increase the marginal value of additional data from any source.
    \item $c_{K+1}$ is inversely proportional to the candidate sample size: larger $n_{K+1}$ (smaller $c_{K+1}$) increase the marginal value of additional data from candidate $P_{K+1}$.
\end{itemize}

These quantities interact to produce a closed-form measure of candidate value that depends only on distributional structure and sample sizes.
The theorem establishes that excess risk reduction is determined by the distributional weight covariances and is universal across loss functions and data types (continuous, ordinal, binary, categorical, mixed). This means the DUC can be estimated without observing outcomes, using only the distributional relationships between sources (see Section~\ref{sec: estimation}).
To illustrate how this general formula operates in concrete settings, we now examine a special case with independent distributional weights.

\begin{corollary}\label{corr: indepW} 
    Let $W^{(1)} \stackrel{\text{a.s.}}{=} 1$. Under independent distributional weights, $W^{(k)}$, the data usefulness coefficient from Equation~\eqref{eqn: rho_defn} simplifies to 

        \begin{equation*}
        \rho^2_{K+1|1, \ldots,K} =\frac{\frac{1}{\mathrm{Var}(W^{(K+1)})+c_{K+1}}}{\sum_{k=1}^{K+1} \frac{1}{\mathrm{Var}(W^{(k)})+c_{k}}}.
    \end{equation*}
The proof can be found in Appendix \ref{app: corr1}. 
\end{corollary}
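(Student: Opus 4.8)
The plan is to specialize the general DUC formula in Equation~\eqref{eqn: rho_defn} to the independent-weight regime and show that all cross terms collapse, leaving a ratio of reciprocal ``effective variances'' $a_k := \mathrm{Var}(W^{(k)}) + c_k$, where $a_1 = c_1$ since $\mathrm{Var}(W^{(1)}) = 0$. Two structural facts drive every step. First, because $W^{(1)} \stackrel{\text{a.s.}}{=} 1$, each $W^{(k)}$ has mean one, and the weights obey $\beta^\intercal 1 = 1$, every residual $W^{(1)} - W^{\beta}$ and $W^{(K+1)} - W^{\beta}$ has mean zero; hence each $\mathbb{E}[(\cdot)^2]$ in the definitions of $\beta^*$ and $\beta'$ equals the corresponding variance. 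Second, independence of the $W^{(k)}$ gives $\mathrm{Cov}(W^{(k)}, W^{(k')}) = \mathbbm{1}\{k = k'\}\,\mathrm{Var}(W^{(k)})$, which annihilates all off-diagonal covariances.

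First I would reduce the two optimization problems. Using the two facts above, the objective defining $\beta^*$ becomes the separable quadratic $\mathrm{Var}(W^{\beta}) + \|\beta\|_C^2 = \sum_{k=1}^{K} \beta_k^2\, a_k$ on the simplex $\{\beta^\intercal 1 = 1\}$ (the $k=1$ term contributes $\beta_1^2 a_1 = \beta_1^2 c_1$). The objective defining $\beta'$ differs from this only by the additive constant $\mathrm{Var}(W^{(K+1)})$, since independence kills the cross term $\mathrm{Cov}(W^{(K+1)}, W^{\beta}) = \sum_k \beta_k\,\mathrm{Cov}(W^{(k)}, W^{(K+1)}) = 0$. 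Therefore the two problems share the same minimizer, so $\beta' = \beta^*$. A one-line Lagrange-multiplier computation then gives $\beta_k^* = (1/a_k)/S$ with $S := \sum_{k=1}^{K} 1/a_k$, and the minimized objective value
\[
\sum_{k=1}^{K} (\beta_k^*)^2\, a_k \;=\; \frac{\sum_{k=1}^K 1/a_k}{S^2} \;=\; \frac{1}{S}.
\]

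Next I would evaluate the three building blocks of the DUC. The first denominator factor $\mathrm{Var}(W^{(1)} - W^{\beta^*}) + \|\beta^*\|_C^2$ is precisely the minimized $\beta^*$-objective, so it equals $1/S$. For the numerator, dropping the constant $W^{(1)}$ and using $\beta' = \beta^*$ together with the vanishing off-diagonal covariances, the covariance term reduces to $\mathrm{Cov}(W^{\beta^*}, W^{\beta'}) = \sum_k (\beta_k^*)^2\,\mathrm{Var}(W^{(k)})$; adding $\langle \beta^*, \beta'\rangle_C = \sum_k (\beta_k^*)^2 c_k$ rebuilds $\sum_k (\beta_k^*)^2 a_k = 1/S$, so the quantity inside the squared numerator is again $1/S$. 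The second denominator factor becomes $\mathrm{Var}(W^{(K+1)}) + c_{K+1} + 1/S = a_{K+1} + 1/S$. Assembling,
\[
\rho^2_{K+1|1,\ldots,K} = \frac{(1/S)^2}{(1/S)\,(a_{K+1} + 1/S)} = \frac{1/S}{a_{K+1} + 1/S} = \frac{1}{S\,a_{K+1} + 1},
\]
and factoring $a_{K+1}$ out of the denominator via $S\,a_{K+1} + 1 = a_{K+1}\big(S + 1/a_{K+1}\big) = a_{K+1}\sum_{k=1}^{K+1} 1/a_k$ yields the claimed form $\frac{1/a_{K+1}}{\sum_{k=1}^{K+1} 1/a_k}$ after substituting $a_k = \mathrm{Var}(W^{(k)}) + c_k$.

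This is essentially a bookkeeping argument, so the main obstacle is not analytic difficulty but verifying that everything telescopes cleanly: one must confirm that \emph{all} off-diagonal contributions vanish (the cross term $\mathrm{Cov}(W^{(K+1)}, W^{\beta})$ in the $\beta'$-objective and the cross covariances inside the numerator), that the identification $\beta' = \beta^*$ is legitimate, and that the minimized-objective identity $\sum_k (\beta_k^*)^2 a_k = 1/S$ holds. The key conceptual point making the collapse work is that, under independence, the numerator's covariance-plus-inner-product term coincides exactly with the first denominator factor, both equaling $1/S$; I would state and use this coincidence explicitly. A minor point worth noting is that $a_k > 0$ for all $k$ (guaranteed when each $c_k > 0$), so $\beta_k^* > 0$ automatically and the unconstrained-on-simplex minimizer is admissible.
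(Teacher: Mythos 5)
Your proof is correct and follows essentially the same route as the paper's: both specialize the DUC formula under independence to the separable quadratic $\sum_k \beta_k^2(\mathrm{Var}(W^{(k)})+c_k)$ on the simplex, obtain the common minimizer $\beta_k^* = \beta_k' \propto 1/(\mathrm{Var}(W^{(k)})+c_k)$, evaluate the three building blocks (each residual variance and the covariance term equal to $1/S$ or $a_{K+1}+1/S$), and assemble. Your observation that the two objectives differ only by the additive constant $\mathrm{Var}(W^{(K+1)})$, giving $\beta'=\beta^*$ immediately, is a slightly cleaner shortcut than the paper's separate first-order-condition computations, but the argument is otherwise identical.
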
 

 To illustrate how $\rho^2$ can guide data collection, 
suppose we have the following scenarios, using the approximation $c_k \approx m/n_k$. Our goal is to predict $y^{(1)}$ based on $X^{(1)}$, both drawn from our target distribution $P_1$. We want to decide whether to sample from a new data source, $P_2$, for this task.

\paragraph{Example 1 (no shift).} 
Suppose there is no distributional uncertainty in the simplest case. We have $n_1$ samples from $P_1$ and can collect $n_2$ samples from $P_2$. Then the excess risk ratio after adding the $(K+1)$-th source is:
\begin{equation*}
    1- \text{DUC} = 1- \frac{n_2}{n_1 + n_2} =  \frac{n_1}{n_1 + n_2} = \frac{\frac{1}{n_1 + n_2}}{\frac{1}{n_1}}
\end{equation*}
As expected, in the absence of distributional shift, risk reduction is proportional to inverse sample size.

\paragraph{Example 2 (shift).} 
Now suppose we already have $n_2$ samples from a perturbed distribution $P_2$ with distributional variance $\mathrm{Var}(W^{(2)})/m$, and $n_1$ samples from our target $P_1$ with no distributional uncertainty. We can collect $100$ additional observations from $P_1$ (fixed target) or $200$ from $P_2$ (perturbed source): how much precision would we gain by sampling these additional observations? In this case, if we sample from $P_1$, the relative improvement is
\begin{equation*}
    \text{DUC} = \frac{100}{n_1 + 100 + \frac{1}{\frac{1}{n_2} + \mathrm{Var}(W^{(2)})/m } }
\end{equation*}
and if we sample from $P_2$ the relative improvement in MSE is
\begin{equation*}
    \text{DUC} = \frac{\frac{1}{\frac{1}{n_2+200} + \mathrm{Var}(W^{(2)})/m }- \frac{1}{\frac{1}{n_2} + \mathrm{Var}(W^{(2)})/m }}{n_1 + \frac{1}{\frac{1}{n_2+200} + \mathrm{Var}(W^{(2)})/m } }
\end{equation*}
Assume $\mathrm{Var}(W^{(2)})/m = 0.1$ for this example. \newline 
\textbf{Option A: 100 more target samples.} Going from $n_1 = 50$ to $n_1 = 150$, the relative improvement in MSE is:
\begin{align*}
\text{DUC} &= \frac{100}{150 + \frac{1}{\frac{1}{1000} + 0.1}}\approx 62.5 \%
\end{align*}
\newline
\textbf{Option B: 200 more perturbed samples.} Going from $n_2 = 1000$ to $n_2 = 1200$, the relative improvement in MSE is:
\begin{align*}
\text{DUC} &= \frac{\frac{1}{\frac{1}{1000} + 0.1} - \frac{1}{\frac{1}{1200} + 0.1}}{50 + \frac{1}{\frac{1}{1200} + 0.1}} \approx 0\%
\end{align*}
The DUC correctly predicts that target samples are far more valuable here. 

\paragraph{Example 3 (no sampling uncertainty)} On the opposite side of Example 1, suppose there is no sampling uncertainty, i.e. $n_k \rightarrow \infty, k>1$ and we have no samples from $P_1$, so distributional uncertainty dominates. This gives $\beta_{2:K}^* \approx \argmin_{\beta_{2:K}^\intercal 1 = 1} \mathbb{E}[( W^{(1)} - \sum_{k=2}^K \beta_k W^{(k)}  )^2]$ and $\beta_1^* \approx 0$. Define $W^{\beta_{2:K}^*} \coloneqq \sum_{k=2}^K \beta_k^* W^{(k)}$. Then we have
\begin{equation*}
    \rho^2 \approx \text{Cor}^2(W^{(1)} -  W^{\beta_{2:K}^*}, W^{(K+1)} - W^{\beta_{2:K}^*} |  (W^{(k)} - W^{\beta_{2:K}^*})_{k=2,\ldots,K} ).
\end{equation*}
i.e., we can interpret the value of the additional data as the partial correlation between the differences in the target distribution weight, $W^{(1)}$ and the candidate source distribution weight, $W^{(K+1)}$ with existing data from $P_2, \ldots, P_K$.

This framework allows practitioners to compare candidate datasets systematically, moving beyond ad-hoc similarity measures to principled risk-reduction predictions. The next section shows how to estimate $\rho^2$ in practice using only covariate summary statistics.

\section{Estimation of the data usefulness coefficient}\label{sec: estimation}
In practice, data prioritization decisions must often be made before collecting expensive outcomes. While outcome collection is costly, covariate summary statistics are typically available from candidate sources. 

Our estimation approach exploits this asymmetry: we estimate $\rho^2_{K+1|1,2, \ldots,K}$ by measuring how each candidate source's covariate distribution relates to the target population. Assuming target population summaries $E_1[X_\ell]$ are known (as in census or clinical applications), we construct vectors capturing distributional differences and compute partial correlations to isolate each source's unique contribution. Let $K < L$, i.e., there are more covariates than candidate source distributions. This assumption is natural in many applications (e.g., choosing among a few geographic regions using dozens of demographic variables) and ensures the covariance matrix has full rank for reliable partial correlation estimation. 

To connect Equation~\eqref{eqn: rho_defn} to practical estimation of the DUC and provide intuition, we derive an approximation that relates distributional weights to correlations between covariate means. Specifically, we can approximate the covariance between two distributional weights $W^{(k)}$ and $W^{(k')}$ in terms of the covariance and variance of covariates means:
\begin{align*}
    \mathrm{Cov}(E_k[X_\ell],E_{k'}[X_\ell]) &\approx \mathrm{Cov}\left(\sum_{j=1}^m \frac{W_j^{(k)}}{m} E_f[X_\ell|(X,Y) \in I_j],\sum_{j=1}^m \frac{W_j^{(k')}}{m} E_f[X_\ell|(X,Y) \in I_j]\right)\\
    &= \frac{1}{m^2} \sum_{j=1}^m \mathrm{Cov}(W_j^{(k)}, W_j^{(k')}) E_f[X_\ell|(X,Y) \in I_j]^2\\
    &\approx \frac{1}{m} \mathrm{Cov}(W^{(k)}, W^{(k')}) Var_f(X_\ell).
\end{align*}
Here, in the first line we used the definition of the random shift and that $\frac{1}{m} \sum_{j=1}^m W_j^{(k)} = 1 + o_P(1/\sqrt{m})$. In the second line we used that the weights are i.i.d.\ across $j=1,\ldots,m$. In the limit as the partition refines, $\frac{1}{m}\sum_{j=1}^m E_f[X_\ell|I_j]^2 \to \mathrm{Var}_f(X_\ell)$ when $X_\ell$ is centered under $P_f$.

This approximation reveals that the covariance structure of distributional weights can be estimated through the empirical variance of covariate means. The following proposition formalizes this connection and establishes consistency of the estimator (formal proof in Appendix \ref{app: consistency_cov}).

\begin{proposition}[Consistency and asymptotic normality of data usefulness estimate] \label{prop: consistency_cov}
Let $K < L$ and assume the setting of Lemma~\ref{lemma:samp_asymp} with the covariates standardized to have unit variance and be uncorrelated under $P_f$. Define the vectors 
$$Z^{(1)} = (E_1[X_\ell] - \hat{E}_1[X_\ell])_{\ell=1}^L, \quad Z^{(k)} = (\hat{E}_k[X_\ell] - \hat{E}_1[X_\ell])_{\ell=1}^L \text{ for } k = 2, \ldots, K+1.$$ 
Let $\hat{\rho}^2_{K+1|1,2, \ldots,K}$ denote the squared empirical partial correlation of $Z^{(1)}$ and $Z^{(K+1)}$ adjusted for $Z^{(2:K)}$, and $\hat{\sigma}^2 = 4\hat{\rho}^2_{K+1|1,2, \ldots,K}(1 - \hat{\rho}^2_{K+1|1,2, \ldots,K})^2$. Then,
\begin{equation*}
  \lim_{L \rightarrow \infty}  \lim_{m \rightarrow \infty} 
 P \left( \sqrt{L}\frac{\hat{\rho}^2_{K+1|1,2, \ldots,K} - \rho^2_{K+1|1,2, \ldots,K}}{\hat{\sigma}^2} \le x \right) = \Phi(x),
\end{equation*}
where $\Phi(x)$ is the cumulative density function of a standard Gaussian random variable.
\end{proposition}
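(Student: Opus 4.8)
The plan is to exploit the nested-limit structure of the statement: for each fixed number of covariates $L$, I first drive the partition size $m\to\infty$ so that the rescaled vectors $\sqrt{m}\,Z^{(1)},\dots,\sqrt{m}\,Z^{(K+1)}$ converge jointly to a Gaussian whose $L$ coordinates are \emph{exactly} i.i.d.; at that point $\hat{\rho}^2_{K+1|1,\ldots,K}$ is precisely the squared sample partial correlation computed from $L$ i.i.d.\ Gaussian vectors, and the outer limit $L\to\infty$ reduces to classical large-sample theory for partial correlations. Thus the proof splits into three parts: (i) a weak-convergence step in $m$ identifying the limiting per-coordinate covariance $\Sigma$; (ii) a deterministic algebraic step showing that the population squared partial correlation induced by $\Sigma$ equals the DUC of Equation~\eqref{eqn: rho_defn}; and (iii) a delta-method/Slutsky step in $L$ delivering the stated normal limit.

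For step (i), I would decompose each coordinate into a distributional and a sampling part. Writing $a_{j\ell}=E_f[X_\ell\mid (X,Y)\in I_j]$ and $w_j^{(k)}=W_j^{(k)}-1$, the approximations preceding the proposition give $E_k[X_\ell]\approx \tfrac1m\sum_j w_j^{(k)}a_{j\ell}$, while the sampling error $\hat E_k[X_\ell]-E_k[X_\ell]$ is independent of the weights with conditional variance $\approx \mathrm{Var}_f(X_\ell)/n_k$. Because the target mean $E_1[X_\ell]$ is known, $Z^{(1)}$ is pure target-sampling noise, whereas each $Z^{(k)}$ ($k\ge 2$) carries the distributional difference $W^{(k)}-W^{(1)}$ together with source-$k$ and (shared) target sampling noise. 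A central limit theorem over the $m$ i.i.d.\ regions then yields, after scaling by $\sqrt m$, a jointly Gaussian limit in which the distributional parts have per-coordinate cross-covariance governed by $\mathrm{Cov}(W^{(k)},W^{(k')})$ and the independent sampling parts contribute per-coordinate variances $m/n_k\to c_k$. The decisive consequence of the standardization assumption is $\tfrac1m A^\top A\to I_L$, which forces distinct coordinates to \emph{decorrelate} in the limit; since uncorrelated jointly Gaussian coordinates are independent, the limiting vector has $L$ i.i.d.\ coordinates with a fixed $(K+1)\times(K+1)$ per-coordinate covariance $\Sigma$ whose entries are $\Sigma_{11}=c_1$, $\Sigma_{1k}=c_1$ for $k\ge 2$, and $\Sigma_{kk'}=\mathrm{Cov}(W^{(k)}-W^{(1)},W^{(k')}-W^{(1)})+c_1+c_k\delta_{kk'}$ for $k,k'\ge 2$. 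Justifying this convergence rigorously — in particular controlling the $o_P$ remainders from the weight normalization $\tfrac1m\sum_j W_j^{(k)}=1+o_P(1/\sqrt m)$ and verifying the Lindeberg/Lyapunov conditions for both the region sums and the sampling averages — is where I would invoke Lemma~\ref{lemma:samp_asymp} and the model's moment assumptions.

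Step (ii) is the main obstacle, and it is purely deterministic: I must show that the population squared partial correlation of $\tilde Z^{(1)}$ and $\tilde Z^{(K+1)}$ given $\tilde Z^{(2)},\dots,\tilde Z^{(K)}$, read off from $\Sigma$ via Schur complements, coincides with $\rho^2_{K+1|1,\ldots,K}$. The difficulty is that the DUC is defined through the \emph{constrained, ridge-penalized} projections $\beta^*=\argmin_{\beta^\top 1=1}\mathbb{E}[(W^{(1)}-W^{\beta})^2]+\|\beta\|_C^2$ and the analogous $\beta'$, whereas the partial correlation is an \emph{unconstrained} Gaussian projection onto $\mathrm{span}(\tilde Z^{(2:K)})$. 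I would reconcile the two by checking that the sampling variances $c_k$ sitting on the diagonal of $\Sigma$ reproduce exactly the ridge penalties $\|\cdot\|_C^2$ in the normal equations, that the shared target-sampling component $-T^{(1)}$ and the off-diagonal $c_1$ entries generate the $\langle\beta^*,\beta'\rangle_C$ cross-term and the $+c_1$ appearing in the residual variances, and that the constraint $\beta^\top 1 = 1$ emerges from the $(W^{(k)}-W^{(1)})$ differencing built into the $Z^{(k)}$. I have verified this matching in the base case $K=1$ (no conditioning), where it collapses to $\rho^2=c_1/(\mathrm{Var}(W^{(2)}-W^{(1)})+c_1+c_2)$ in agreement with Equation~\eqref{eqn: rho_defn}; the general case is the linear-algebra extension of this identity.

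Finally, for step (iii), having made the coordinates exactly i.i.d.\ $N(0,\Sigma)$ after the inner limit, the squared sample partial correlation $\hat\rho^2_{K+1|1,\ldots,K}$ is consistent for $\rho^2$ and, by the classical theory of sample (partial) correlations from $L$ i.i.d.\ Gaussian observations, $\sqrt{L}\,(\hat\rho-\rho)\to N(0,(1-\rho^2)^2)$, the conditioning on $K-1$ variables only reducing the effective sample size by a constant that is negligible as $L\to\infty$. Applying the delta method to $g(r)=r^2$ gives $\sqrt{L}\,(\hat\rho^2-\rho^2)\to N\big(0,\,4\rho^2(1-\rho^2)^2\big)$, whose variance is exactly the target of $\hat\sigma^2=4\hat\rho^2_{K+1|1,\ldots,K}(1-\hat\rho^2_{K+1|1,\ldots,K})^2$. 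Consistency of $\hat\rho^2$ together with the continuous mapping theorem makes $\hat\sigma^2$ a consistent estimate of the asymptotic variance, and studentizing by the plug-in standard deviation and invoking Slutsky's lemma yields the standard-normal limit $\Phi(x)$ claimed in the proposition.
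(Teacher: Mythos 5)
Your proposal is correct and takes essentially the same route as the paper: the inner limit $m\to\infty$ is handled by the distributional CLT (Lemma~\ref{lemma:samp_asymp}) together with scale-invariance of correlation and continuous mapping, reducing $\hat{\rho}^2_{K+1|1,\ldots,K}$ to the squared sample partial correlation of $L$ i.i.d.\ Gaussian coordinates, and the outer limit $L\to\infty$ is the classical delta-method argument giving the $4\rho^2(1-\rho^2)^2$ variance, exactly as in the paper. The step you flag as the main obstacle---matching the population partial correlation induced by your per-coordinate covariance $\Sigma$ to the DUC of Equation~\eqref{eqn: rho_defn}---is not re-derived in the paper's proof of this proposition but is already established at the end of the proof of Theorem~\ref{thm: pred_rho} (the identity in Eqn~\ref{pf:pop_pcorr} together with the explicit computations of $\mathrm{Var}(r_1)$, $\mathrm{Var}(r_{K+1})$, and $\mathrm{Cov}(r_1,r_{K+1})$, which are precisely the ridge-penalty, cross-term, and constraint reconciliations you describe and verify for $K=1$), so your sketched general case is exactly that computation and could simply be cited.
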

This result provides a practical application of Theorem~\ref{thm: pred_rho}. While the theorem shows that datasets with higher $\rho^2_{K+1|1,2, \ldots,K}$ provide greater excess risk reduction, Proposition~\ref{prop: consistency_cov} demonstrates how to reliably estimate these values using only covariate summary statistics. The precision depends on $L$, the number of available covariates. More covariate information yields more precise DUC estimates, allowing for more confident data prioritization decisions.

\begin{remark}[Practical implementation with correlated covariates]\label{rem:whitening}
The assumption that covariates are uncorrelated under $P_f$ may not hold in practice. We recommend applying a whitening transformation to the covariates before computing DUC estimates. Specifically, estimate the covariance matrix $\hat{\Sigma}$ from available data (pooling across sources), compute its Cholesky decomposition $\hat{\Sigma} = L L^\intercal$, and transform covariates as $\tilde{X} = L^{-1}(X - \mu)$ where $\mu$ is the pooled mean. The DUC is then computed using the whitened covariates $\tilde{X}_\ell$, which by construction have unit variance and are approximately uncorrelated, satisfying the conditions of Proposition~\ref{prop: consistency_cov}.
\end{remark}

The proof of the following result can be found in Appendix \ref{app: ci_transf}.

\begin{remark}\label{confint}
When DUC estimates are similar, accounting for uncertainty is essential for reliable decision-making. Let $m, n_k \rightarrow \infty$ with $m/n_k \rightarrow c_k > 0$, assume non-zero sample variances for all $Z^{(k)}$ when $L > 2$ and $\hat{\rho} \in (0, 1)$, and define $g(\rho^2) = \frac{1}{2}\log \left(\frac{1+\rho}{1-\rho}\right)$ and $z_{1-\frac{\alpha}{2}}$ as the $1-\frac{\alpha}{2}$ quantile of the standard normal distribution. Then
\begin{equation}\label{eqn: CI}
    \left[g^{-1}\left(g(\hat{\rho}^2_{K+1|1, \ldots,K}) \pm L^{-1/2}z_{1-\frac{\alpha}{2}}\right)\right]
\end{equation} 
is an asymptotically valid confidence interval for $\rho^2_{K+1|1, \ldots,K}$ that accounts for both sampling and distributional variation. 
\end{remark}

This estimation framework provides a complete practical methodology for data prioritization. Given target population summary statistics and covariate information from candidate sources, practitioners can compute DUC estimates, assess their uncertainty through confidence intervals, and make informed data collection decisions. The approach scales with the number of available covariates, making it particularly valuable in high-dimensional settings.

\section{Empirical examples} \label{sec: empirical}

Having developed the theoretical foundation and estimation procedures for the DUC, we now validate our approach through empirical studies designed to answer two key questions:

\begin{enumerate}
    \item \textbf{Does DUC accurately predict data value?} We compare DUC-based rankings against baseline methods (KL divergence, domain classifier scores) and actual predictive performance on real data (Section~\ref{sec: acs_income} and Section~\ref{sec: nhanes_pred}).
    \item \textbf{Do theoretical predictions hold in simulations?} We validate the core theoretical results through controlled simulations where ground truth is known (Section~\ref{sec:simulation}).
\end{enumerate}

Our empirical strategy progresses from real-world applications that demonstrate practical utility to controlled simulations that verify theoretical predictions. All experiments use only covariate information for computing DUC, with outcomes reserved solely for validation\footnote{Code to reproduce experiments can be found on \url{https://github.com/yzhangi96/prioritizing_data_collection}.}.

\subsection{ACS Income data}\label{sec: acs_income}
The ACS Income data from the American Community Survey (ACS) provides detailed demographic and socioeconomic information including individual income, employment status, education, and state. We demonstrate our data prioritization methods using a Random Forest (RF) model across various prediction tasks. For each setting, we use the preprocessed ACS data from the Python \texttt{folktables} package \citep{folktables}, reporting results on a held-out target test set ($n_\text{test} =1000$) over 1,000 trials. The random shift assumption for this dataset has been validated using diagnostic tests in \cite{random_shift2}, supporting our modeling framework. 

We begin with predicting the (log) income of adults in CA using data from other states. Suppose we have survey data for only 30 adults from CA, as surveying in CA is expensive. However, we can readily survey individuals from other states, such as NY, NJ, TX, IL, NC, FL, GA, MI, OH, and PA. Which state should we choose?

\begin{figure}[h]
   \centering
   \includegraphics[width=0.85\linewidth]{fig3.png}
   \caption{Average ranking, from lowest to highest test MSE, using Random Forest trained on California ($n_\text{CA} =30$) plus one source state ($n_\text{source} =1000$) over 1000 trials. Higher $\hat{\rho}^2$ (DUC) effectively corresponds to better ranking, where 1 represents the dataset with the lowest test MSE. Our method is most correlated with average rankings and correctly prioritizes datasets for prediction.}
      \label{fig: pred_CA}
\end{figure}

We compare the DUC against two baseline methods for ranking data sources: (1) KL divergence between source and target covariate distributions and (2) the domain classifier approach from \cite{data_addition_dilemma}, which estimates distributional distance by training a classifier to distinguish between source and target samples using covariates. Specifically, this approach estimates $\mathbb{E}_{\text{source}}[s(x)]$, where $s(x)$ is a classifier trained on covariates alone to predict whether samples come from the source distribution or target distribution. Higher scores indicate greater distributional dissimilarity.

Based on our DUC, we would survey NY or NJ, rather than PA or OH (Figure~\ref{fig: pred_CA}), which effectively results in better average predictive performance. The results are in line with \cite{data_addition_dilemma}, which demonstrate that the domain classifier score is more correlated with predictive performance compared to the KL divergence. However, while both competing methods would also select NY or NJ, it would also incorrectly overestimate the usefulness of states like TX. As a result, competing methods show weaker correlation with dataset rankings. 

We also highlight the importance of subsampling in practice for estimating DUC: we recommend employing subsampling and averaging results over repeated trials as we have shown in Figure \ref{fig: pred_CA} for estimation. We show in Appendix \ref{app: acs} results for a single trial (no subsampling) and 100 trials (mild subsampling) as comparison, which result in weaker correlation between DUC and ranking. In addition, we illustrate how subsampling can improve estimation even when we have poor estimates of target population covariate means.

\subsection{NHANES}\label{sec: nhanes_pred}
The CDC National Health and Nutrition Examination Survey (NHANES) dataset combines questionnaires with lab tests from U.S. participants annually. We demonstrate a second application of our data prioritization method on the task of predicting cholesterol level in adults in the year 2017 using subsets of data from prior years. As we will see, this setting which involves both temporal and demographic shifts is a more challenging one than the one for California income prediction. For this experiment, we use the preprocessed NHANES cholesterol dataset from the Python \texttt{tableshift} package \citep{gardner2023tableshift} and follow the same procedures as in the ACS Income setup for 1000 trials.

\begin{figure}[h]
    \centering
    \includegraphics[width=0.85\linewidth]{fig_nhanes.png}
    \caption{Average ranking (by test MSE) for training on 2017 (\(n_{2017}=30\)) and a source dataset (\(n_{\text{source}}=1000\)) across 1000 trials. Although none of the methods identified D4 as the top option, DUC exhibited the highest correlation with the empirical average ranks.}
        \label{fig: fig_nhanes}
\end{figure}

Suppose we are in 2017 and already collected 30 observations from adults. We can readily purchase historical data of various compositions. For example, one dataset contains questionnaires and lab results from a mix of records from 1999, 2001, 2003, and 2015 while another dataset contains only records from men in 2011. Which should we choose? In this more challenging task, none of the methods precisely identified dataset 4 (D4) as the best ranking dataset. However, our method still correctly prioritizes datasets such as datasets 4, 5, and 1 over datasets 3, 4, and 2 unlike the KL divergence or domain classifier score. The results are in line with both the ACS Income data from Section \ref{sec: acs_income} and \cite{data_addition_dilemma}, which demonstrate that the domain classifier score is more correlated with predictive performance compared to the KL divergence. See Appendix \ref{app: nhanes} for additional empirical results (e.g. test MSE plots) and description of candidate source data sets.

\subsection{Simulations}\label{sec:simulation}
We also demonstrate our main theoretical results with simulations. Datasets are generated from randomly shifted distributions following the model described in Section \ref{sec: setting}, using the \texttt{calinf} R package. Specifically, let covariates $X_1^{(k)}, \ldots, X_{15}^{(k)}$ be drawn from randomly shifted $\text{Bern}(0.5)$ distributions, and $X_{16}^{(k)}, \ldots, X_{30}^{(k)}$ from randomly shifted $N(0,1)$ distributions. The response is defined as $y^{(k)}=\sum_{j=1}^{15} X_j^{(k)} + (X_{16}^{(k)})^2+(X_{17}^{(k)})^2+ \sum_{j={18}}^{30} X_j^{(k)} + \epsilon_{\text{noise}}$, $\epsilon_{\text{noise}}$ drawn from randomly shifted $\text{Unif}(-1,1)$. Our goal is to predict $y^{(1)}$, the outcome from the target distribution $P_1$, with our covariates, with ordinary least squares regression.

Suppose we already have $300$ samples from a fixed $P_1$ and $400$ samples from a perturbed $P_2$. Using only covariate means to estimate the DUC of adding samples from $15$ different perturbed distributions of varying shifts and sample sizes, we compare the DUC with the actual model performance of optimally weighted samples. Figure \ref{fig: sim} shows the results of the simulation assessed on 50,000 unseen test samples from $P_1$, averaged over 1,000 trials: the estimated DUC is approximately that of the $1-\text{relative excess MSE ratio}$, per Theorem \ref{thm: pred_rho}. We also compare this with other proposed metrics such as KL divergence and a domain classifier score using covariates. Unlike the DUC, they do not correlate as well with actual excess relative MSE and can't be used to as confidently predict model performance. See the accompanying code for details. 

\begin{figure}[ht]
    \centering
    \includegraphics[width=\linewidth]{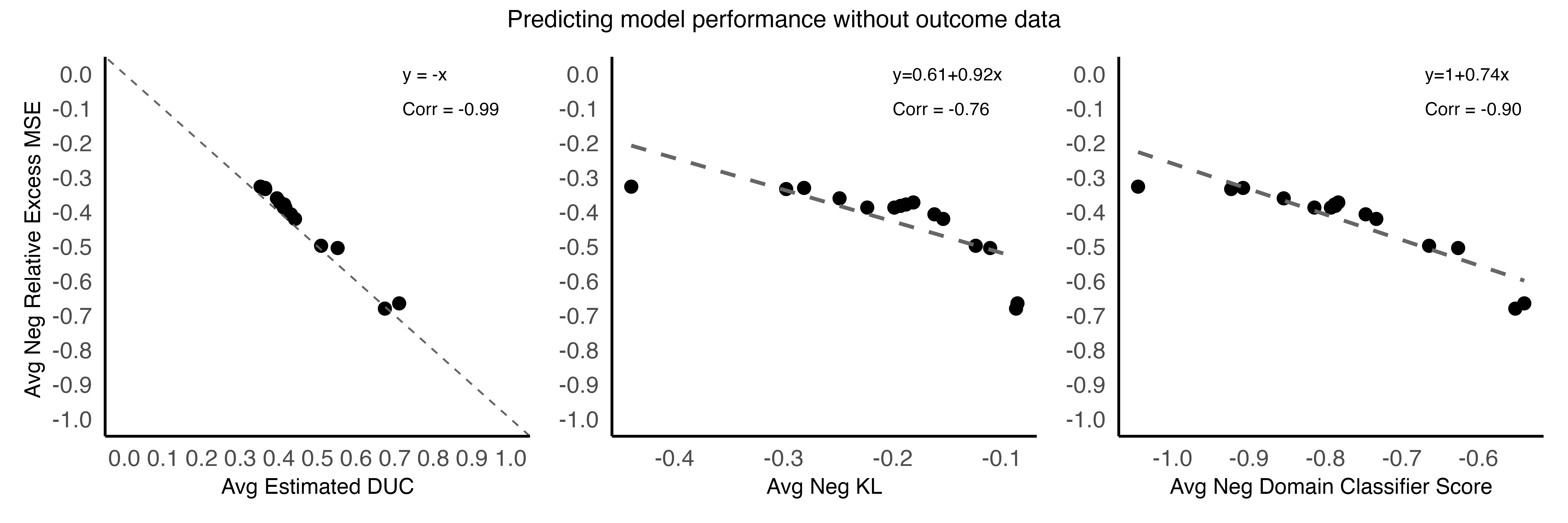}
    \caption{Simulation across 15 candidate datasets of varying sizes drawn from randomly perturbed distributions. With an existing training pool of \(300\) target samples from \(P_1\) and \(400\) perturbed source samples from \(P_2\), the DUC accurately predicts the average improvement from adding a new dataset in a nonlinear setting with mixed (continuous/binary) features. Compared with the KL divergence and domain classifier score, DUC aligns most closely with model test performance.}
    \label{fig: sim}
\end{figure}

\section{Discussion} \label{sec: discussion}

This paper introduced the data usefulness coefficient, $\rho^2$, which quantifies the reduction in excess risk from an additional data source under a random shift model. We established theoretical guarantees for $\rho^2$ estimation from limited source data and validated its effectiveness for data prioritization on tabular prediction tasks.

Several limitations of our approach warrant consideration. In the high-dimensional regime where the number of candidate datasets $K$ exceeds the covariate dimension $L$, regularization techniques may be required for stable covariance matrix estimation. Beyond the ERM setting, accurately predicting absolute MSE changes becomes more difficult; for instance, with tree-based models like Random Forests that do not directly minimize a loss function. However, our empirical results show that $\rho^2$ maintains accurate relative rankings of dataset usefulness even when these assumptions are violated.

Our framework applies to distribution shifts driven by random perturbations in the data generating process, shifts where source and target distributions differ due to accumulated variations in latent factors rather than systematic covariate shifts or adversarial perturbations. Future work could develop hybrid models that incorporate both random shift components and structured shift types (e.g., covariate shift, label shift, adversarial shift) to handle settings where distribution changes arise from multiple mechanisms.

\section{Acknowledgments}
We thank Tim Morrison, Asher Spector, Kevin Guo, Disha Ghandwani, Aditya Ghosh,  Markus Kalisch, and Nicolas Gubser for helpful feedback. Rothenh\"ausler gratefully acknowledges support as a David Huntington Faculty Scholar, Chamber Fellow, and from the Dieter Schwarz Foundation. 

\bibliographystyle{apalike}
\bibliography{ref}

\section*{Appendix}
\appendix
\section{Details of main results}

\subsection{Random shift construction details}\label{app: random_shift_construction}

This section provides the technical construction of the random distribution shift model referenced in Section \ref{sec: setting}. Without loss of generality, the distributional perturbations are constructed for uniform distributions on $[0,1]$. A result from probability theory shows that any random variable $D = (X,y)$ on a standard Borel space $\mathcal{D}$ can be written as a measurable function $D \stackrel{d}{=} h(U)$, where $U$ is a uniform random variable on $[0,1]$. The asymptotic behavior will not depend on the choice of this measurable function. With the transformation $h$, the construction of distributional perturbations for uniform variables can be generalized to the general cases by setting
\begin{equation*}
        P_k( (X,y) \in \bullet) = P_k(h(U) \in \bullet).
\end{equation*}
To construct a perturbed distribution $P_k$ for $U$, a uniform random variable under $P_f$, we first divide the $[0,1]$ range into $m$ bins: $I_j = [(j-1)/m, j/m]$ for $j = 1, \dots, m$. Let $W_{1}^{(k)}, \dots, W_{m}^{(k)}$ be i.i.d.\ positive random variables with mean one and finite variance. We assume the weights are bounded away from $0$, that is, we assume that $W_j^{(k)}\geq w_f$ for any $k$ and some fixed $w_f>0$. For each bin $I_j$, we assign a weight $W_j^{(k)}$ to represent how much probability mass is assigned to it in $P_k$. 

Thus, we define the randomly perturbed distribution $P_k$ by setting the density of the perturbed distribution,
\begin{equation*}
    P_{k}(x,y) = \frac{W_j^{(k)}}{\frac{1}{m} \sum_{j'=1}^m W_{j'}^{(k)}} P_f(x,y) \qquad \text{ for } (x,y) \in  I_j.
\end{equation*} 
As a result, some regions in the data space of $P_k$ have randomly higher or lower probability mass compared to $P_f$.   

Conditionally on $W = \left( W_{j}^{(k)} \right)_{k=1,\ldots,K; j=1,\ldots,m}$, $\{ D_{1}^{(k)}, \ldots, D_{n_k}^{(k)} \}$, $D^{(k)} \in \mathcal{D}$, are i.i.d.\ $n_k$ draws from $P_k$. Let $\frac{m}{n_k}$ converge to some constant $c_k > 0$ for each $k$. Formally, we consider sequences $n_k(m)$ as $m \rightarrow \infty$ such that $n_k(m)/m \rightarrow c_k > 0$, a constant that measures the ratio between sampling and distributional uncertainty. Intuitively, this assumption means that sampling uncertainty (order $1/n_k$) and distributional uncertainty (order $1/m$) are of the same order. In the following, for simplicity, we notationally drop the dependence of $n_k$ on $m$.  We chose this asymptotic regime because, in both other regimes (where either sampling uncertainty or distributional uncertainty dominates), the trade-off between data quality and data quantity is not apparent.  See \cite{random_shift1} and \cite{random_shift3} for diagnostics to test the random shift assumption in real data.

\subsection{Technical conditions for Theorem~\ref{thm: pred_rho}}\label{app: thm1_conditions}

The following regularity conditions are required for Theorem~\ref{thm: pred_rho}:

\begin{enumerate}
\item Let $\Theta$ be a compact set and $\ell(\theta,X,y)$ be any twice continuously differentiable, bounded loss function.
\item The loss function satisfies the Lipschitz condition: $\|\nabla^2\ell(\theta,X,y)-\nabla^2\ell(\theta',X,y)\|_{\text{op}} \leq L(X,y)\|\theta-\theta'\|$ for all $\theta, \theta'$ near $\theta_f$, the unique minimizer of the population risk under $P_f$, and some $L \in L^2(P_f)$.
\item The Hessian $E_f[\nabla^2\ell(\theta_f,X^{(f)},y^{(f)})]$ exists and is invertible.
\item The third order partial derivatives of $\ell(\theta, X,y)$ are dominated by a fixed function $h(\cdot) \in L^1(P_f)$ in an $\epsilon$-ball around $\theta_f$.
\item The asymptotic regime assumes $m/n_k \rightarrow c_k$ for some constants $c_k > 0$, $k=1,\ldots,K+1$.
\item The optimal weight vectors $\beta^*$ and $\beta'$ as defined in Theorem~\ref{thm: pred_rho}, satisfy $\beta_k^*>0$ and $\beta_k'>0$ for all $k=1,\ldots,K$.
\end{enumerate}

These conditions ensure that the empirical risk minimization estimators have well-defined asymptotic distributions and that the excess risk decomposition leading to the DUC formula is valid.

\subsection{Proof of Theorem~\ref{thm: pred_rho}}\label{app: thm1}

\begin{proof} The first part of this proof up until Eqn~\eqref{app: thm1_thetahat} is analogous to the proof of Lemma 3 from \cite{random_shift2}, which considers ERM under the regime where sampling uncertainty is of a lower order than distributional uncertainty, under a fixed target distribution. In contrast, in this paper, sampling uncertainty and distributional uncertainty are of the same order, and we consider a more general, random target distribution setting.

Consider the following weighted estimator 
    \begin{equation*}
        \hat \theta^K(\beta) = \arg \min_\theta \sum_{k=1}^K \beta_k \frac{1}{n_k} \sum_{i=1}^{n_k} \ell(\theta,X_i^{(k)}, y_i^{(k)}).
    \end{equation*} 
Let $\hat{E}_k[\ell(\theta,X^{(k)},y^{(k)})] = \frac{1}{n_k} \sum_{i=1}^{n_k} \ell(\theta,X_i^{(k)}, y_i^{(k)})$. By Lemma~\ref{lemma: mestimator_consistency}, $\hat{\theta}$ is a consistent estimator of $\theta_f = \argmin_\theta E_f[\ell(\theta, X^{(f)},y^{(f)})]$.
For a random $\tilde{\theta}$ between $\theta_f$ and $\hat{\theta}$, Taylor expansion around $\theta_f$ gives 
\begin{align}\label{eqn: thetahat_taylor}
\begin{split}
 0=\sum_{k=1}^K \beta_k \hat{E}_{k}[\nabla \ell(\hat{\theta},X^{(k)}, y^{(k)})] &= \sum_{k=1}^K \beta_k \hat{E}_{k}[\nabla \ell(\theta_f,X^{(k)}, y^{(k)})] + \bigg[\sum_{k=1}^K \beta_k \hat{E}_{k}[\nabla^2 \ell(\theta_f,X^{(k)}, y^{(k)})] \bigg](\hat{\theta}-\theta_f) \\
 &+ \frac{1}{2}(\hat{\theta}-\theta_f)^\intercal \sum_{k=1}^K \beta_k \hat{E}_{k}[\nabla^3 \ell(\tilde{\theta},X^{(k)}, y^{(k)})](\hat{\theta}-\theta_f).
 \end{split}
\end{align} 
We have that $\hat{\theta} = \theta_f + O_p(1/\sqrt{n_k})$ by Lemma~\ref{lemma: mestimator_consistency}, $\sum_{k=1}^K \beta_k \hat{E}_{k}[\nabla^2 \ell(\theta_f,X^{(k)}, y^{(k)})] = E_f[\nabla^2 \ell(\theta_f,X^{(f)}, y^{(f)})] + O_p(1/\sqrt{m})$ by Lemma~\ref{lemma:samp_asymp}, and $\|\sum_{k=1}^K\beta_k\hat{E}_{k}[\nabla^3 \ell(\tilde{\theta},X^{(k)}, y^{(k)})]\| \leq \sum_{k=1}^K|\beta_k|\frac{1}{n_k}\sum_{i=1}^{n_k}h(D_{i}^{(k)}) = O_p(1)$ by assumption. Since $m=O(n_k)$ for all $k$, rearranging Eqn~\eqref{eqn: thetahat_taylor} gives
\begin{align}\label{app: thm1_thetahat}
\begin{split}
 \hat{\theta}(\beta) &= \theta_f - \bigg[E_f[\nabla^2 \ell(\theta_f,X^{(f)}, y^{(f)})] + O_p(1/\sqrt{m})] \bigg]^{-1} \sum_{k=1}^K \beta_k \hat{E}_{k}[\nabla \ell(\theta_f,X^{(k)}, y^{(k)})] + O_p(1/m)\\
 &=\theta_f - \bigg[E_f[\nabla^2 \ell(\theta_f,X^{(f)}, y^{(f)})] \bigg]^{-1} \sum_{k=1}^K \beta_k \hat{E}_{k}[\nabla \ell(\theta_f,X^{(k)}, y^{(k)})] + O_p(1/m).
\end{split}
\end{align} 
By an analogous argument, we get
\begin{align}\label{app: thm1_theta0}
\begin{split}
 \theta_1 &= \theta_f - \bigg[E_f[\nabla^2 \ell(\theta_f,X^{(f)}, y^{(f)})] \bigg]^{-1}   E_{1}[\nabla \ell(\theta_f,X^{(1)}, y^{(1)})] + O_p(1/m).
\end{split}
\end{align} 

On the high probability event that $\hat{\theta}$ and $\theta_1$ are close enough to $\theta_f$ for the L-Lipschitz assumption to hold (see Corollary~\ref{cor: mestimator_consistency_theta0}), a Taylor expansion of the excess risk, $E_1[\ell(\hat{\theta}(\beta),X^{(1)}, y^{(1)}) - \ell(\theta_1,X^{(1)}, y^{(1)})]$, gives
\begin{align*}
  &E_1[\ell(\hat{\theta}(\beta),X^{(1)}, y^{(1)}) - \ell(\theta_1,X^{(1)}, y^{(1)})]\\ &= E_1[\nabla \ell(\theta_1,X^{(1)}, y^{(1)})] ( \hat{\theta}(\beta) - \theta_1) + \frac{1}{2} (\hat{\theta}(\beta) - \theta_1)^\intercal \nabla^2 \ell(\tilde{\theta},X^{(1)}, y^{(1)})(\hat{\theta}(\beta) - \theta_1)]\\
  &= \frac{1}{2} (\hat{\theta}(\beta) - \theta_1)^\intercal E_1[\nabla^2 \ell(\tilde{\theta},X^{(1)}, y^{(1)})](\hat{\theta}(\beta) - \theta_1)\\
  &=\frac{1}{2} (\hat{\theta}(\beta) - \theta_1)^\intercal E_1[\nabla^2 \ell(\theta_f,X^{(1)}, y^{(1)})](\hat{\theta}(\beta) - \theta_1)+r_m,
\end{align*}
where the remainder term, $r_m=\frac{1}{2} (\hat{\theta}(\beta) - \theta_1)^\intercal E_1[\nabla^2 \ell(\tilde{\theta},X^{(1)}, y^{(1)})- \nabla^2 \ell(\theta_f,X^{(1)}, y^{(1)})](\hat{\theta}(\beta) - \theta_1)$. We have by Corollary~\ref{cor: mestimator_consistency_theta0} that $\hat{\theta} = \theta_1 + O_p(1/\sqrt{m})$ and %
\begin{align*}
     E_1[\nabla^2 \ell(\tilde{\theta},X^{(1)}, y^{(1)})- \nabla^2 \ell(\theta_f,X^{(1)}, y^{(1)})] &\leq E_1[L(x)\|\tilde{\theta}- \theta_f\|] && \text{(Lipschitz assumption)}\\
     &\leq E_1[L(x)]\max(\|\tilde{\theta}- \theta_f\|, \|\hat{\theta}- \theta_f\|)\\
     &=O_p(1) \cdot O_p(1/\sqrt{m}),
\end{align*}
so we get that $r_m=o_p(1/m)$. Once more, we get
\begin{align*}
  E_1[\ell(\hat{\theta}(\beta),X^{(1)}, y^{(1)}) - \ell(\theta_1,X^{(1)}, y^{(1)})] 
  &=\frac{1}{2} (\hat{\theta}(\beta) - \theta_1)^\intercal E_1[\nabla^2 \ell(\theta_f,X^{(1)}, y^{(1)})](\hat{\theta}(\beta) - \theta_1)+o_p(1/m)\\
  &=\frac{1}{2} (\hat{\theta}(\beta) - \theta_1)^\intercal E_f[\nabla^2 \ell(\theta_f,X^{(f)}, y^{(f)})](\hat{\theta}(\beta) - \theta_1) + r'_m +o_p(1/m).
\end{align*}
Here, $r'_m=\frac{1}{2} (\hat{\theta}(\beta) - \theta_1)^\intercal (E_1[\nabla^2 \ell(\theta_f,X^{(1)}, y^{(1)})] - E_f[\nabla^2 \ell(\theta_f,X^{(f)}, y^{(f)})])(\hat{\theta}(\beta) - \theta_1)$. By Lemma~\ref{lemma:samp_asymp}, $E_1[\nabla^2 \ell(\theta_f,X^{(1)}, y^{(1)})] = E_f[\nabla^2 \ell(\theta_f,X^{(f)}, y^{(f)})] + O_p(1/\sqrt{m})$, so $r'_m=o_p(1/m)$ as well. Denote the vector $\hat{E}_{1:K}[\nabla \ell(\theta_f,X, y)]^\intercal \coloneqq [\hat{E}_{1}[\nabla \ell(\theta_f,X^{(1)}, y^{(1)}), \ldots, \hat{E}_{K}[\nabla \ell(\theta_f,X^{(K)}, y^{(K)})]$. Using Eqn~\eqref{app: thm1_thetahat} and Eqn~\eqref{app: thm1_theta0}, we therefore get
\begin{align*}
     &E_1[\ell(\hat{\theta}(\beta),X^{(1)}, y^{(1)}) - \ell(\theta_1,X^{(1)}, y^{(1)})]\\
  &=\frac{1}{2} (\hat{\theta}(\beta) - \theta_1)^\intercal E_f[\nabla^2 \ell(\theta_f,X^{(f)}, y^{(f)})](\hat{\theta}(\beta) - \theta_1) + o_p(1/m)\\
    &=\beta^\intercal \left( \hat{E}_{1:K}[\nabla \ell(\theta_f,X, y)]-E_1[\nabla \ell(\theta_f,X^{(1)}, y^{(1)})] \right)^\intercal  S^{-1}_f \left(\hat{E}_{1:K}[\nabla \ell(\theta_f,X, y)] -E_1[\nabla \ell(\theta_f,X^{(1)}, y^{(1)})]\right) \beta+o_p(1/m),
\end{align*}
where $S^{-1}_f\coloneqq \frac{1}{2}E_f[\nabla^2 \ell(\theta_f,X^{(f)}, y^{(f)})]^{-1}$. 
 By Lemma~\ref{lemma:samp_asymp}, we have that for $$J_{m,K}  \coloneqq \left(\hat{E}_{1:K}[\nabla \ell(\theta_f,X, y)]  -E_1[\nabla \ell(\theta_f,X^{(1)}, y^{(1)})] \right)^\intercal  S^{-1}_f \left(\hat{E}_{1:K}[\nabla \ell(\theta_f,X, y)]  -E_1[\nabla \ell(\theta_f,X^{(1)}, y^{(1)})]\right),$$ $mJ_{m,K} \overset{d}{\rightarrow} J_K$, for some random variable $J_K$ with mean $\Sigma^{K} \text{tr}(S^{-1}_f\mathrm{Var}_f(\nabla \ell(\theta_f,X^{(f)}, y^{(f)})))$ and finite variance. Here, $\Sigma^K= A\Sigma^{W}A^\intercal + \mathrm{Diag}(c_1, c_2,\ldots,c_K)$ is the limiting covariance matrix, with
\begin{equation*}
    A = \begin{pmatrix}
     0_{1 \times 1} & 0_{1 \times (K-1)} \\
-1_{(K-1) \times 1} & I_{(K-1) \times (K-1)}
    \end{pmatrix},
\end{equation*}
as in Lemma \ref{lemma:samp_asymp}. Then we have 
\begin{align}
\begin{split}\label{eq:weak-convergence-excess-risk}
     \lim_{G\rightarrow \infty} \lim_{m\rightarrow \infty} m\mathcal{E}^{K}
     &= \lim_{G\rightarrow \infty} \lim_{m\rightarrow \infty} mE[\max(E_1[\ell(\hat{\theta},X^{(1)}, y^{(1)}) - \ell(\theta_1,X^{(1)}, y^{(1)})], -G/m) \wedge G/m] \\
     &= \lim_{G\rightarrow \infty} \lim_{m\rightarrow \infty} mE[\max(\beta^\intercal J_{m,K} \beta + o_p(1/m), -G/m) \wedge G/m] \\
          &= \lim_{G\rightarrow \infty} E\bigg[\max \left(\beta^\intercal J_K\beta, -G \right) \wedge G \bigg] \\
     &= \beta^\intercal \bigg( \Sigma^{K}  \text{tr}(S^{-1}_f \mathrm{Var}_f(\nabla \ell(\theta_f,X^{(f)}, y^{(f)})))\bigg)\beta,
  \end{split}
  \end{align}
where we use dominated convergence theorem to interchange the limits and expectation and continuous mapping for $f(x)=\max(x, -G)$. 

Recall that we assume the optimal $\beta$'s are positive, so we remove the $\beta \geq 0$ constraint. Solving the optimization problem with a $\beta^\intercal \mathbf{1}=1$ constraint gives optimal weights in the limit of 
\begin{equation}\label{eqn: optimal_beta}
    \beta^* = \frac{(\Sigma^K)^{-1}\mathbf{1}}{\mathbf{1}^\intercal (\Sigma^K)^{-1}\mathbf{1}}    
\end{equation}
Substituting the optimal weights of Eqn \ref{eqn: optimal_beta} into the objective of Eqn \ref{eq:weak-convergence-excess-risk} gives
\begin{align*}
    \lim_{G\rightarrow \infty} \lim_{m\rightarrow \infty} m\mathcal{E}^{K} &=\beta^\intercal \bigg( \Sigma^{K}  \text{tr}(S^{-1}_f \mathrm{Var}_f(\nabla \ell(\theta_f,X^{(f)}, y^{(f)})))\bigg)\beta\\
     &= \text{tr}(S^{-1}_f \mathrm{Var}_f(\nabla \ell(\theta_f,X^{(f)}, y^{(f)}))) \ \mathcal{C}(\Sigma^{K}).
\end{align*}

Now, let $\Delta^{(k)} = v^{(k)} - v^{(1)}$ and $v^{(k)} = W^{(k)} + \sqrt{c_k} L^{(k)}$, where $L^{(k)}$'s are i.i.d.\ standard Gaussian random variables, independent of $W^{(k)}$. Letting $\beta_1 = 1- \sum_{k>1}\beta_k$, we can therefore write by Lemma \ref{lemma:samp_asymp} and the construction of the random shift model,
$$C(\Sigma^K)\propto \frac{1}{m} \min_{\beta_{2:K}} E[(W^{(1)} - v^{(1)}-\beta^\intercal_{2:K}\Delta^{(2:K)})^2]$$
and 
$$\frac{C(\Sigma^{K+1})}{C(\Sigma^{K})} =\frac{\min_{\beta_{2:K+1}} E[(W^{(1)} - v^{(1)}-\beta^\intercal_{2:K+1}\Delta^{(2:K+1)})^2]}{\min_{\beta_{2:K}} E[(W^{(1)} - v^{(1)}-\beta^\intercal_{2:K}\Delta^{(2:K)})^2]}.$$
Define the residuals of the OLS regression of $W^{(1)} - v^{(1)}$ and $\Delta^{(K+1)}$ on $\Delta^{(2:K)}$ respectively as $r_1$ and $r_{K+1}$.
First, note that we can rewrite the $K+1$ coordinate of the OLS regression of $W^{(1)} - v^{(1)}$ on $\Delta^{(2:K+1)}$  as $\beta^*_{K+1}=\frac{\mathrm{Cov}(r_1,r_{K+1})}{\mathrm{Var}(r_{K+1})}$. Using this, we can write
\begin{align}
     &\min_{\beta_{2:K+1}} E[(W^{(1)} - v^{(1)}-\beta^\intercal_{2:K+1}\Delta^{(2:K+1)})^2]  \nonumber\\
     &= \min_{\beta_{K+1}} \min_{\beta_{2:K}} E[(W^{(1)} - v^{(1)}-\beta^\intercal_{2:K+1}\Delta^{(2:K+1)})^2]   \nonumber\\
     &= \min_{\beta_{K+1}} E\bigg[\bigg( r_1 - \beta_{K+1}r_{K+1}\bigg)^2\bigg]\nonumber\\
    &=\mathrm{Var}(r_1) -2\beta^*_{K+1} E[r_1 r_{K+1}] + {\beta^*}_{K+1}^2\mathrm{Var}(r_{K+1})\nonumber\\
    &=\mathrm{Var}(r_1) -2\frac{\mathrm{Cov}^2(r_1, r_{K+1})}{\mathrm{Var}(r_{K+1})} + \frac{\mathrm{Cov}^2(r_1, r_{K+1})}{\mathrm{Var}^2(r_{K+1})}\mathrm{Var}(r_{K+1})\nonumber\\
     &= \mathrm{Var}(r_1)\bigg(1 -\frac{\mathrm{Cov}^2(r_1, r_{K+1})}{\mathrm{Var}(r_1)\mathrm{Var}(r_{K+1})}\bigg)\nonumber\\
     &=\min_{\beta_{2:K}} E[(W^{(1)} - v^{(1)}-\beta^\intercal_{2:K}\Delta^{(2:K)})^2]\bigg(1 -\frac{\mathrm{Cov}^2(r_1, r_{K+1})}{\mathrm{Var}(r_1)\mathrm{Var}(r_{K+1})}\bigg) \label{pf:pop_pcorr}
\end{align}

Since $L^{(k)}$'s are i.i.d standard Gaussian random variables, independent of $W^{(k)}$, we get that 
\begin{align*}
    \mathrm{Var}(r_1) &= \mathrm{Var}\left(W^{(1)}-v^{(1)}-\sum_{k=2}^{K} \beta_k^* (v^{(k)}-v^{(1)})\right)\\
    &= \mathrm{Var}\left(W^{(1)}-\sum_{k=1}^{K} \beta_k^* W^{(k)}-\sum_{k=1}^{K} \beta_k^*\sqrt{c_k}L^{(k)} \right)\\
    &= \mathrm{Var}\left(W^{(1)}-\sum_{k=1}^{K} \beta_k^* W^{(k)} \right) + \sum_{k=1}^{K} (\beta_k^*)^2 c_k\\
    \mathrm{Var}(r_{K+1})  &=  \mathrm{Var}\left(v^{(K+1)} - v^{(1)}-\sum_{k=1}^{K} \beta_k' (v^{(k)}-v^{(1)})\right)\\
    &=  \mathrm{Var}\left(W^{(K+1)} + \sqrt{c_{K+1}}L^{(K+1)} -\sum_{k=1}^{K} \beta_k' W^{(k)}- \sum_{k=1}^{K} \beta_k'\sqrt{c_k}L^{(k)} \right)\\
    &=  \mathrm{Var}\left(W^{(K+1)} -\sum_{k=1}^{K} \beta_k' W^{(k)} \right) + \sum_{k=1}^{K} (\beta_k')^2 c_k   + c_{K+1}\\
    \mathrm{Cov}(r_1, r_{K+1}) &= \mathrm{Cov}(W^{(1)} -v^{(1)} - \sum_{k=2}^{K} \beta_k^* (v^{(k)}-v^{(1)}), v^{(K+1)} -v^{(1)} - \sum_{k=2}^{K} \beta_k' (v^{(k)}-v^{(1)}))\\
    &=\mathrm{Cov}(W^{(1)} - \sum_{k=1}^{K} \beta_k^* W^{(k)}, W^{(K+1)} -\sum_{k=1}^{K} \beta_k' W^{(k)}) + \sum_{k=1}^K \beta_k^* \beta_k' c_k,
\end{align*}
giving us the result  

\begin{equation*}
 \lim_{G \rightarrow \infty} \lim_{m,n_k \rightarrow \infty}  \frac{\mathcal{E}^{K+1}}{\mathcal{E}^{K}}  = 1 - \frac{(\mathrm{Cov}( W^{(1)} -  W^{\beta^*}, W^{(K+1)} - W^{\beta'}) +  \sum_{k=1}^K \beta_k^*\beta_k' c_k )^2 }{ (\mathrm{Var}(W^{(1)} - W^{\beta^*}) + \sum_{k=1}^K (\beta_k^*)^2 c_k )  ( \mathrm{Var}(W^{(K+1)} - W^{\beta'}) + \sum_{k=1}^K (\beta_k')^2 c_k + c_{K+1} ) }.
\end{equation*}

\end{proof}

\subsection{Proof of Corollary ~\ref{corr: indepW}}\label{app: corr1}

\begin{proof}

In this proof, we use the notation of the proof of Theorem~\ref{thm: pred_rho}.
Let $\sigma^2_k \coloneqq \mathrm{Var}(W^{(k)})+c_k$ and $W^{(1)} \stackrel{\text{a.s.}}{=} 1$. 
We have under independent $W^{(k)}$'s and $E[W^{(k)}] = 1$ that
\begin{equation*}
   \mathrm{Var}(r_{K+1}) =  \min_{{\beta}^\intercal 1 = 1} \mathrm{Var}(W^{(K+1)}) + \sum_{k=1}^K \text{Var}(W^{(k)}) \beta_k^2  + c_k \beta_k^2
\end{equation*}
giving the first order condition under constraint ${\beta}^\intercal 1 = 1$ of 
\begin{align*}
    0&=2\beta'_k \sigma_k^2 - 2  \beta'_j \sigma_j^2,
\end{align*}
resulting in optimal $\beta_k' = \frac{1/\sigma_k^2}{\sum_{j=1}^K 1/\sigma_j^2}$. Therefore, we get
\begin{align}\label{corr1_app: ek+1}
    \mathrm{Var}(r_{K+1}) &= \mathrm{Var}(W^{(K+1)}) + \left(\sum_{k=1}^K 1/\sigma^2_k\right)^{-1}.
\end{align} 
Similarly, we have
\begin{align*}
    \mathrm{Var}(r_1) &= \min_{{\beta}^\intercal 1 = 1} \text{Var}(W^{(1)} - \sum_{k=1}^K \beta_k W^{(k)}) + \sum_{k=1}^K c_k \beta_k^2
\end{align*}
Recall that by assumption $W^{(1)} = 1$. Thus,
\begin{align*}
    \mathrm{Var}(r_1) &= \min_{{\beta}^\intercal 1 = 1}  \sum_{k=1}^K \beta^2_k \sigma_k^2
\end{align*}
Solving this equation gives us the same optimal $\beta^*_k =\frac{1/\sigma_k^2}{\sum_{j=1}^K 1/\sigma_j^2}$.
Evaluating the variance at the optimal $\beta^*$ values and simplifying gives 
\begin{align}\label{corr1_app: e1}
    \mathrm{Var}(r_1) &=  \left(\sum_{k=1}^K 1/\sigma^2_k\right)^{-1}.
\end{align}
Furthermore, we have
\begin{align}\label{corr1_app: num}
  \mathrm{Cov}(r_1, r_{K+1}) = \left(\sum_{k=1}^{K}(\beta_k^*)^2\sigma_k^2\right)^2 = \left(\sum_{k=1}^K 1/\sigma_k^2\right)^{-2}.
\end{align}
Combining Eqns \ref{corr1_app: ek+1}, \ref{corr1_app: e1}, \ref{corr1_app: num} and rearranging gives 
    \begin{align*}
        \rho^2_{K+1|1, \ldots,K} = \frac{\mathrm{Cov}^2(r_1, r_{K+1})}{\mathrm{Var}(r_1)\mathrm{Var}(r_{K+1})}  &=\frac{\left(\sum_{k=1}^K 1/\sigma_k^2\right)^{-2} }{\left(\sum_{k=1}^K 1/\sigma^2_k\right)^{-1}\left(\sigma_{K+1}^2+(\sum_{k=1}^K 1/\sigma^2_k)^{-1}\right)}\\
        &= \frac{1/\sigma^2_{K+1}}{((\sum_{k=1}^{K} \frac{1}{\sigma_k^2}) \sigma^2_{K+1}+1)(1/\sigma^2_{K+1})} \\
        &= \frac{(\mathrm{Var}(W^{(K+1)})+c_{K+1})^{-1}}{\sum_{k=1}^{K+1} (\mathrm{Var}(W^{(k)})+c_{k})^{-1}}.
    \end{align*}
\end{proof}

\subsection{Proof of Proposition~\ref{prop: consistency_cov} }\label{app: consistency_cov} 

\begin{proof}
From Eqn \ref{pf:pop_pcorr}, we have by definition of partial correlation that equivalently, 

\begin{equation*}
 \lim_{G \rightarrow \infty} \lim_{m,n_k \rightarrow \infty}  \frac{\mathcal{E}^{K+1}}{\mathcal{E}^{K}}  = 1 - \text{partial.correlation}^2(W^{(1)}-v^{(1)},\Delta^{(K+1)} | \Delta^{(2:K)} ) .
\end{equation*}
So it remains to show that the empirical partial correlation of $Z^{(1)}_{l=1,\ldots,L}$ and $Z^{(K+1)}_{l=1,\ldots,L}$ given $Z^{(2:K)}_{l=1,\ldots,L}$ converges in distribution to the partial correlation of $W^{(1)}-v^{(1)}$ and $\Delta^{(K+1)} $ given $ \Delta^{(2:K)} )$.

By the distributional CLT (Lemma \ref{lemma:samp_asymp}), for each fixed $L$, and for $m,n_k \rightarrow \infty$,
\begin{equation}
   \sqrt{m} Z^{(1:K+1)} \stackrel{d}{\rightarrow} J,
\end{equation}
where $J$ is a matrix with independent rows, and each row is a draw from a Gaussian distribution with mean zero and covariance matrix $\Sigma^K= A\Sigma^{W}A^\intercal + \mathrm{Diag}(c_1, c_2,\ldots,c_K)$. Since correlation is invariant to scaling, the empirical partial correlation of $Z^{(1)}_{l=1,\ldots,L}$ and $Z^{(K+1)}_{l=1,\ldots,L}$ given $Z^{(2:K)}_{l=1,\ldots,L}$ is equal to the empirical correlation of $\sqrt{m} Z^{(1)}_{l=1,\ldots,L}$ and $\sqrt{m} Z^{(K+1)}_{l=1,\ldots,L}$ given $\sqrt{m} Z^{(2:K)}_{l=1,\ldots,L}$. Furthermore, by continuous mapping, the distributional CLT  implies that the latter empirical partial correlation asymptotically ($m \rightarrow \infty$) has the same distribution as the empirical partial correlation of $J_{1,l}$ and $J_{K+1,l}$ given $J_{2:K,l}$ over $l=1,\ldots,L$. Let $\hat \rho_{\text{Gaussian},L}$ denote the empirical correlation of $J_{1,l}$ and $J_{K+1,l}$ given $J_{2:K,l}$ over $l=1,\ldots,L$. Formally, for fixed $L$ and $m \rightarrow \infty$
\begin{equation*}
    \hat \rho_{K+1|1,2,\ldots,K} \stackrel{d}{\rightarrow} \hat \rho_{\text{Gaussian},L}.
\end{equation*}
Thus,
\begin{equation*}
    \lim_{m \rightarrow \infty} 
 P ( \sqrt{L}\frac{\hat{\rho}_{K+1|1,2, \ldots,K} - \rho_{K+1|1,2, \ldots,K}}{( 1- \rho^2_{K+1|1,2, \ldots,K})} \le x ) = P( \sqrt{L} \frac{ \rho_{\text{Gaussian},L} - \rho_{K+1|1,2,\ldots,K}}{ ( 1- \rho^2_{\text{Gaussian},L})} \le x )
\end{equation*}
A well-known application of the delta method gives us that for the sample correlation of i.i.d.\ Gaussians $(J_l)_{l=1,\ldots,L}$, we have
\begin{equation*}
    \sqrt{L} (\rho_{\text{Gaussian},L} - \rho_{K+1|1,2, \ldots,K} )\rightarrow N(0, (1- \rho^2_{\text{Gaussian},L})^2), 
\end{equation*}
and therefore
\begin{equation*}
    \sqrt{L} (\rho^2_{\text{Gaussian},L} - \rho^2_{K+1|1,2, \ldots,K} )\rightarrow N(0, 4\rho^2_{\text{Gaussian},L}(1- \rho^2_{\text{Gaussian},L})^2). 
\end{equation*} Thus,
\begin{equation*}
  \lim_{L \rightarrow \infty}  \lim_{m \rightarrow \infty} 
 P ( \sqrt{L}\frac{\hat{\rho}^2_{K+1|1,2, \ldots,K} - \rho^2_{K+1|1,2, \ldots,K}}{\hat {\sigma}^2} \le x ) = \lim_{L \rightarrow \infty} P( \sqrt{L} \frac{ \rho^2_{\text{Gaussian},L} - \rho^2_{K+1|1,2, \ldots,K}}{ 4\rho_{\text{Gaussian},L}( 1- \rho^2_{\text{Gaussian},L})^2} \le x ) = \Phi(x).
\end{equation*}
This completes the proof of Proposition \ref{prop: consistency_cov}. Note that this also implies we can consistently estimate $\Sigma^K$ using the empirical covariance matrix of $Z^{1:K}$. 

\end{proof}

\section{Proofs of supporting results}

\subsection{Lemma \ref{lemma:samp_asymp}}\label{app: samp_asymp}

\begin{lemma}[Distributional CLT]\label{lemma:samp_asymp} 
Let $\phi_k: \mathcal{D} \rightarrow \mathbbm{R}$ be any Borel measurable square-integrable function under $P_{f}$ and consider the setting where distributional and sampling uncertainty are of the same order, i.e., $m/n_k \rightarrow c_k > 0$ for each $k=1, \ldots, K$. Then 
$$
\sqrt{m}
\begin{pmatrix}
\begin{pmatrix}
\frac{1}{n_1} \sum_{i=1}^{n_1} \phi_1(D^{(1)}_i) \\
\vdots \\
\frac{1}{n_K} \sum_{i=1}^{n_K} \phi_K(D_i^{(K)})
\end{pmatrix}
-
\begin{pmatrix}
E_1[\phi_1(D^{(1)})] \\
\vdots \\
E_1[\phi_K(D^{(K)})]
\end{pmatrix}
\end{pmatrix}
\overset{d}{\rightarrow} N\big(0, \Sigma^K \odot \mathrm{Var}_f(\phi_1(D^{(f)}), \ldots, \phi_K(D^{(f)}))\big),
$$
where $\Sigma^K$ is the asymptotic covariance matrix defined as
\begin{equation*}
    \Sigma^K =   A\Sigma^{W} A^\intercal+ \mathrm{Diag}(c_1, c_2,\ldots,c_K),
\end{equation*}
for 
\begin{equation*}
    A = \begin{pmatrix}
     0_{1 \times 1} & 0_{1 \times (K-1)} \\
-1_{(K-1) \times 1} & I_{(K-1) \times (K-1)}
    \end{pmatrix}.
\end{equation*}
Here, $\Sigma^{W} \in \mathbb{R}^{K \times K}$ has entries $\operatorname{Cov}(W^{(k)}, W^{(k')})$  and $\odot$ denotes the Hadamard product, i.e., under the above assumptions,

\begin{equation*}
    \Sigma^K =    \begin{pmatrix}
      c_1 & 0 & \ldots & 0\\
      0 & \Sigma^W_{1,1} + \Sigma^W_{2,2}-2\Sigma^W_{1,2}+c_2 & \ldots & \Sigma^W_{1,1} + \Sigma^W_{2,K} - \Sigma^W_{1,2} - \Sigma^W_{1,K}\\
      \vdots & \vdots & & \vdots\\
      0 & \Sigma^W_{1,1} + \Sigma^W_{K,2} - \Sigma^W_{2,1} - \Sigma^W_{K,1} & \ldots & \Sigma^W_{K,K} + \Sigma^W_{1,1}-2\Sigma^W_{1,K} +c_K
    \end{pmatrix}
\end{equation*}
\end{lemma}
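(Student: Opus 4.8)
The plan is to split each coordinate into a sampling fluctuation and a distributional fluctuation, treat them separately, and recombine. Writing
\[
\hat{E}_k[\phi_k] - E_1[\phi_k] = \big(\hat{E}_k[\phi_k] - E_k[\phi_k]\big) + \big(E_k[\phi_k] - E_1[\phi_k]\big),
\]
where $E_k[\phi_k] = \int \phi_k\,\mathrm{d}P_k$ is the population mean under the random distribution $P_k$ (i.e.\ conditional on the weights $W$), the first bracket is pure sampling noise and the second is a functional of the weights alone. After scaling by $\sqrt{m}$, I would show that the sampling vector and the distributional vector converge jointly to independent mean-zero Gaussians with covariances $\mathrm{Diag}(c_1,\ldots,c_K)\odot\mathrm{Var}_f(\phi)$ and $(A\Sigma^W A^\intercal)\odot\mathrm{Var}_f(\phi)$ respectively; since $\odot$ is linear, their sum is $\Sigma^K\odot\mathrm{Var}_f(\phi)$, the claim.

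For the distributional part I would use the random-shift definition to write, with $a_{j,k} \coloneqq E_f[\phi_k \mid I_j]$ and $\tfrac1m\sum_j a_{j,k} = E_f[\phi_k]$,
\[
E_k[\phi_k] = \frac{\tfrac1m\sum_{j=1}^m W_j^{(k)} a_{j,k}}{\tfrac1m\sum_{j=1}^m W_j^{(k)}}.
\]
Because $\tfrac1m\sum_j W_j^{(k)} = 1 + O_P(1/\sqrt m)$, a delta-method linearization of the ratio gives $E_k[\phi_k] - E_f[\phi_k] = \tfrac1m\sum_j (W_j^{(k)}-1)(a_{j,k}-E_f[\phi_k]) + o_P(1/\sqrt m)$, and subtracting the same expression for $k=1$ yields $E_k[\phi_k]-E_1[\phi_k] = \tfrac1m\sum_j (W_j^{(k)}-W_j^{(1)})(a_{j,k}-E_f[\phi_k]) + o_P(1/\sqrt m)$. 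Multiplying by $\sqrt m$ turns this into a normalized sum of the $m$ i.i.d.\ region-weight vectors, to which I apply a Lindeberg CLT for triangular arrays. The limiting covariance of coordinates $k$ and $k'$ factorizes as $\mathrm{Cov}\!\big(W^{(k)}-W^{(1)}, W^{(k')}-W^{(1)}\big)\cdot\lim_m\tfrac1m\sum_j(a_{j,k}-E_f[\phi_k])(a_{j,k'}-E_f[\phi_{k'}])$: the first factor is exactly the $(k,k')$ entry of $A\Sigma^W A^\intercal$ (recall $AW=(0,W^{(2)}-W^{(1)},\ldots,W^{(K)}-W^{(1)})^\intercal$), and as the partition refines the second factor converges to $\mathrm{Cov}_f(\phi_k,\phi_{k'})$ via $L^2$-convergence of conditional expectations $E_f[\phi_k\mid\mathcal F_m]\to\phi_k$ (valid since $\phi_k\in L^2(P_f)$). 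This reproduces $(A\Sigma^W A^\intercal)\odot\mathrm{Var}_f(\phi)$, and for $k=1$ the distributional term vanishes identically, matching the zero first row and column of $A\Sigma^W A^\intercal$.

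For the sampling part I would condition on $W$: given the weights the $n_k$ draws are i.i.d.\ from $P_k$, so the conditional CLT gives $\sqrt{n_k}(\hat E_k[\phi_k]-E_k[\phi_k]) \to N(0,\mathrm{Var}_k(\phi_k))$, and rescaling by $\sqrt{m/n_k}\to\sqrt{c_k}$ together with $\mathrm{Var}_k(\phi_k)\to\mathrm{Var}_f(\phi_k)$ produces the marginal limit $N(0, c_k\mathrm{Var}_f(\phi_k))$; since the samples are drawn independently across $k$ the conditional cross-covariances vanish, giving $\mathrm{Diag}(c_1,\ldots,c_K)\odot\mathrm{Var}_f(\phi)$. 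To recombine, I argue through characteristic functions: denoting the total vector $T_m = U_m + V_m$ with $U_m$ the sampling part and $V_m$ the $\sigma(W)$-measurable distributional part, conditioning gives $E[e^{\mathrm i t^\intercal T_m}] = E\big[e^{\mathrm i t^\intercal V_m}\, E[e^{\mathrm i t^\intercal U_m}\mid W]\big]$; the inner conditional characteristic function converges to the nonrandom $\exp(-\tfrac12 t^\intercal(\mathrm{Diag}(c)\odot\mathrm{Var}_f(\phi))t)$, so it factors out and the remaining $V_m$-expectation converges to the distributional Gaussian, yielding joint normality with covariance equal to the sum.

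The step I expect to be the main obstacle is this recombination. I must verify a conditional Lindeberg (or Lyapunov) condition so that the conditional CLT for $U_m$ holds with a \emph{deterministic} limiting variance and the conditional characteristic function converges in probability while staying bounded, so that dominated convergence can pull the limit through the outer expectation. In parallel I need to control the ratio-linearization remainder uniformly and establish the refining-partition limit $\tfrac1m\sum_j a_{j,k}a_{j,k'}\to E_f[\phi_k\phi_{k'}]$, where the max-to-sum negligibility of the conditional means $(a_{j,k})$ is exactly where square-integrability of $\phi_k$ and the mean-one, bounded-away-from-zero weight assumptions are used; this mirrors the single-source argument in \cite{random_shift1}, here extended to the joint, same-order regime across $K$ correlated sources.
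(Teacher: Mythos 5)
Your proposal is correct, and its skeleton matches the paper's: both proofs condition on the weights $W$, split each coordinate into a sampling fluctuation $\hat{E}_k[\phi_k]-E_k[\phi_k]$ and a distributional fluctuation $E_k[\phi_k]-E_1[\phi_k]$, establish that the conditional sampling limit is Gaussian with a \emph{deterministic} covariance $\mathrm{Diag}(c_1,\ldots,c_K)\odot \mathrm{Var}_f(\phi)$, and then integrate this conditional limit against the distributional randomness to obtain the sum of two independent Gaussians (Cram\'er--Wold handling the vector case). Where you diverge is in the technical instruments for each step. For the distributional part, the paper does not reprove the CLT over regions; it imports it wholesale from auxiliary Lemma 2 of \cite{random_shift2} (giving normality of $E_k[\phi_k]$ around $E_f[\phi_k]$ with covariance $\Sigma^W\odot V$) and then just applies the linear map $A$ by continuous mapping to recenter at $E_1$, whereas you rebuild that lemma from scratch via the ratio linearization $E_k[\phi_k]-E_f[\phi_k]=\tfrac1m\sum_j(W_j^{(k)}-1)(a_{j,k}-E_f[\phi_k])+o_P(1/\sqrt m)$, a Lindeberg triangular-array CLT, and the refining-partition limit $\tfrac1m\sum_j a_{j,k}a_{j,k'}\to \mathrm{Cov}_f(\phi_k,\phi_{k'})$ — your max-to-sum negligibility does follow from square-integrability via uniform integrability of $E_f[\phi_k\mid \mathcal{F}_m]^2$, so this step closes. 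For the sampling part and the recombination, the paper works first with \emph{bounded} functions so that it can apply a Berry--Esseen bound conditional on $W$ (getting a uniform $O(1/\sqrt m)$ error in the conditional CDF, which makes the interchange of limit and outer expectation immediate), and then extends to square-integrable $\phi$ by an explicit truncation $\phi^B$ with Chebyshev bounds that exploit $W_j\geq w_f$; you instead use conditional characteristic functions, whose boundedness by $1$ lets dominated convergence pull the deterministic conditional limit through the outer expectation, and you absorb the integrability issues into a conditional Lindeberg condition (which can be verified in probability by a Markov argument using $E[E_k[g]]\leq E_f[g]/w_f$). The trade-off: the paper's route reuses prior results and gets quantitative, uniform-in-$x$ control at the price of the truncation bookkeeping; your route is self-contained and avoids truncation entirely, at the price of verifying in-probability Lindeberg conditions for both the triangular array over regions and the conditional sampling array.
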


\begin{proof}

In general, we follow the proof of Theorem 1 of \cite{random_shift2}, but make adjustments due to the different asymptotic regimes. In our setting, sampling and distributional uncertainty are of the same order, i.e. $n_{k,m}$ are sequences of of natural numbers such that $m/n_k$ converge to positive real numbers $c_k$, and the target distribution is random. 

It follows from auxiliary Lemma 2 of \cite{random_shift2} that the asymptotic distribution of $E_1[\phi_1(D^{(1)})], \ldots, E_K[\phi_K(D^{(K)})]$ centered around the fixed means, $E_f[\phi_1(D^{(f)})], \ldots, E_f[\phi_K(D^{(f)})]$ is 
$$
\sqrt{m}
\begin{pmatrix}
\begin{pmatrix}
E_1[\phi_1(D^{(1)})] \\
\vdots \\
E_K[\phi_K(D^{(K)})]
\end{pmatrix}
-
\begin{pmatrix}
E_f[\phi_1(D^{(f)})] \\
\vdots \\
E_f[\phi_K(D^{(f)})]
\end{pmatrix}
\end{pmatrix}
\overset{d}{\rightarrow} N\big(0, \Sigma^W \odot V),
$$
where $V\coloneqq \mathrm{Var}_f(\phi_1(D^{(f)}), \ldots, \phi_K(D^{(f)}))$ is the covariance matrix of $(\phi_1(D^{(f)}), \ldots, \phi_K(D^{(f)}))$.
Then letting
\begin{equation*}
    A = \begin{pmatrix}
     0_{1 \times 1} & 0_{1 \times (K-1)} \\
-1_{(K-1) \times 1} & I_{(K-1) \times (K-1)}
    \end{pmatrix},
\end{equation*}
we get by applying the mixed-product property of the Kronecker product and continuous mapping, the asymptotic distribution 
\begin{align}\label{eqn: extended_prop1}
&\sqrt{m}
\begin{pmatrix}
\begin{pmatrix}
E_1[\phi_1(D^{(1)})] \\
\vdots \\
E_K[\phi_K(D^{(K)})]
\end{pmatrix}
-
\begin{pmatrix}
E_1[\phi_1(D^{(1)})] \\
\vdots \\
E_1[\phi_K(D^{(1)})]
\end{pmatrix}
\end{pmatrix}
\overset{d}{\rightarrow} N\big(0,  A \Sigma^W A^\intercal  \odot V).
\end{align}
Note that the top row of the asymptotic covariance matrix of Eqn \ref{eqn: extended_prop1} is $0$.

We first prove the result for bounded functions $\phi^B$. Define $V^B\coloneqq \mathrm{Var}_f(\phi^B_1(D^{(f)}), \ldots, \phi^B_K(D^{(f)}))$ as the covariance matrix of the bounded functions $(\phi^B_1(D^{(f)}), \ldots, \phi^V_K(D^{(f)}))$. Let $\alpha_k$, $k=1,\ldots,K$, be arbitrary real numbers and recall that $\hat{E}_k[\phi_k(D^{(k)})] \coloneqq \frac{1}{n_k}\sum_{i=1}^{n_k} \phi_k(D^{(k)}_i)$. Without loss of generality, assume $E_f[\phi_k(D^{(f)})]=0$ for all $k$. We first analyze the behavior of the following linear combination of sample mean terms 
$$\sqrt{m}\left(\sum_{k=1}^K \alpha_k(\hat{E}_k[\phi_k^B(D^{(k)})]-(E_1[\phi_1^B(D^{(1)})]-E_f[\phi^B_k(D^{(f)})]))\right)= \sqrt{m}\left(\sum_{k=1}^K \alpha_k (\hat{E}_k[\phi_k^B(D^{(k)})]-E_1[\phi_1^B(D^{(1)})])\right).$$
Observe that
\begin{align*}
    &P\bigg(\sqrt{m}\bigg(\sum_{k=1}^K \alpha_k (\hat{E}_k[\phi_k^B(D^{(k)})]-E_1[\phi_k^B(D^{(1)})])\bigg) \leq x \bigg) \\
    &= P\bigg(\sqrt{m}\bigg(\sum_{k=1}^K \alpha_k(\hat{E}_k[\phi_k^B(D^{(k)})] - E_k[\phi_k^B(D^{(k)})]) + \sum_{k=1}^K \alpha_kE_k[\phi_k^B(D^{(k)})]\bigg) \leq x+\sqrt{m}\bigg(\sum_{k=1}^K \alpha_k E_1[\phi_1^B(D^{(1)})]\bigg) \bigg)\\
    &=E\left[P\bigg(\sum_{k=1}^K \alpha_k(\hat{E}_k[\phi_k^B(D^{(k)})] - E_k[\phi_k^B(D^{(k)})]) \leq  m^{-1/2} x- \sum_{k=1}^K \alpha_k(E_k[\phi_k^B(D^{(k)})]-E_1[\phi_k^B(D^{(1)})])\bigg| W\bigg)\right],
\end{align*}
where $W$ are the random weights defined in Section \ref{sec: setting}. Now, to evaluate the probability term, we note that conditional on $W$, $\{\{\phi_k^B(D^{(k)}_i)\}_{i=1}^{n_k}\}_{k=1}^K$ are independent so we can apply the Berry-Esseen bound. 

For clarity, we introduce new notation before applying Berry-Esseen. In the following, $\gamma(\alpha_{1:K}, B, K)>1$ denotes a finite constant depending on $\alpha_1, \ldots, \alpha_K, B, K$ and $\Phi(\mathbf{x})$ denotes the standard Gaussian CDF. Denote 
\begin{align*}
    T_i^{(k)} \coloneqq  \alpha_k\bigg(\frac{1}{n_k}(\phi_k^B(D^{(k)}_i) - E_k[\phi_k^B(D^{(k)}_i)] \bigg) \quad \text{and} \quad S_{n_k,K} \coloneqq \sum_{k=1}^K  \sum_{i=1}^{n_k} T_i^{(k)},
\end{align*} 
where $E[T_i^{(k)}|W]=0$ since $E[\phi_k^B(D^{(k)}_i)|W] = E_k[\phi_k^B(D^{(k)}_i)]$ by construction of the random shift model and $$\sigma_{i,k}^2 \coloneqq \mathrm{Var}(T_i^{(k)}|W)=\alpha_k^2/n_k^2\mathrm{Var}(\phi_k^B(D_i^{(k)})|W)=\alpha_k^2/n_k^2\mathrm{Var}_k(\phi_k^B(D_i^{(k)})).$$ We also have that the third moment of $T_i^{(k)}$ conditional on $W$ is
 \begin{align*}
 E[|T_i^{(k)}|^3|W] &=\alpha_k^3/n_k^3 E\left[\left|\phi_k^B(D^{(k)}_i) - E_k[\phi_k^B(D^{(k)}_i)]\right|^3 \bigg| W\right].
 \end{align*}
Therefore, applying Berry-Esseen gives
\begin{align*}
    \sup_x \bigg|P\left(\frac{S_{n_k,K}}{\sqrt{\sum_{k=1}^K  \sum_{i=1}^{n_k}  \sigma_{i,k}^2}} \leq x \bigg| W\right) -\Phi(x)\bigg| &\leq \frac{\gamma(\alpha_{1:K}, B, K)\sum_{k=1}^K\sum_{i=1}^{n_k}E[|T_i^{(k)}|^3|W]}{(\sum_{k=1}^K\sum_{i=1}^{n_k}  \sigma_{i,k}^2)^{3/2}}\\
    &= \frac{\gamma(\alpha_{1:K}, B, K) \sum_{k=1}^K\alpha_k^3/n_k^2 E\left[\left|\phi_k^B(D^{(k)}) - E_k[\phi_k^B(D^{(k)})]\right|^3 \bigg| W\right]}{(\sum_{k=1}^K  \alpha_k^2/n_k\mathrm{Var}_k(\phi_k^B(D^{(k)})))^{3/2}}\\
    &= O(1/\sqrt{m}),
\end{align*} 
since $K$ is a finite constant and  $m=O(n_k)$ for any $k$ to obtain $O(1/\sqrt{m})$ on the upper bound. Therefore,

\begin{align*}
 &\limsup_{m,n_k \rightarrow \infty} P\bigg(\sqrt{m}\bigg(\sum_{k=1}^K \alpha_k (\hat{E}_k[\phi_k^B(D^{(k)})]-E_1[\phi_k^B(D^{(1)})])\bigg) \leq x \bigg)\\
    &= \limsup_{m,n_k \rightarrow \infty} E\left[\Phi\left(\left(m\sum_{k=1}^K  \alpha_k^2/n_k\mathrm{Var}_k(\phi_k^B(D^{(k)}))\right)^{-1/2}\left(x- \sqrt{m}\sum_{k=1}^K \alpha_k(E_k[\phi_k^B(D^{(k)})]-E_1[\phi_k^B(D^{(1)})] \right)\right) \right] + O(1/ \sqrt{m})
\end{align*}
Let $G, G'$ be standard Gaussian random variables with $G$ independent of $G'$. We have by Eqn~\eqref{eqn: extended_prop1} that $$\sqrt{m}\bigg(\sum_{k=1}^K \alpha_k (E_k[\phi^B(D^{(k)})] - E_1[\phi^B(D^{(1)})])\bigg) \overset{d}{\rightarrow} (\alpha^\intercal (A \Sigma^W A^\intercal \odot V^B)\alpha)^{1/2}G.$$ Since $\Phi$ is a continuous, bounded function, we interchange the limits and expectation to get as $m,n_k \rightarrow \infty$, 
\begin{align*}
&E\left[\Phi\left(\left(m\sum_{k=1}^K  \alpha_k^2/n_k\mathrm{Var}_k(\phi_k^B(D^{(k)}))\right)^{-1/2}\left(x- \sqrt{m}\sum_{k=1}^K \alpha_k(E_k[\phi_k^B(D^{(k)})]-E_1[\phi_k^B(D^{(1)})] \right)\right) \right] \\
  &\rightarrow E\bigg[P\bigg(G' \leq  \left(\sum_{k=1}^K  \alpha_k^2c_k\mathrm{Var}_f(\phi_k^B(D^{(f)}))\right)^{-1/2}\left(x  -(\alpha^\intercal (A \Sigma^W A^\intercal\odot V^B) \alpha)^{1/2}G \right) \bigg)\bigg] \\
  &=P\bigg(\left(\sum_{k=1}^K  \alpha_k^2c_k\mathrm{Var}_f(\phi_k^B(D^{(f)}))\right)^{1/2}G'+ (\alpha^\intercal (A \Sigma^W A^\intercal\odot V^B) \alpha)^{1/2}G \leq x \bigg).
\end{align*}

Therefore, we have in summary that 
\begin{equation}\label{eqn: bdd_distr}
     \sqrt{m}\bigg(\sum_{k=0}^K \alpha_k (\hat{E}_k[\phi_k^B(D^{(k)})]-E_1[\phi_k^B(D^{(1)})]) \bigg)\overset{d}{\rightarrow} N(0,\alpha^\intercal(\Sigma^K\odot V^B)\alpha), 
\end{equation}
where $\Sigma^K$ is the asymptotic covariance matrix defined as
\begin{equation*}
    \Sigma^K =   A\Sigma^{W} A^\intercal+ \mathrm{Diag}(c_1, c_2,\ldots,c_K).
\end{equation*}
By the Cramér-Wold device, 
$$
\sqrt{m}
\begin{pmatrix}
\begin{pmatrix}
\frac{1}{n_1} \sum_{i=1}^{n_1} \phi^B_1(D^{(1)}_i) \\
\vdots \\
\frac{1}{n_K} \sum_{i=1}^{n_K} \phi^B_K(D^{(K)}_i)
\end{pmatrix}
-
\begin{pmatrix}
E_1[\phi^B_1(D^{(1)})] \\
\vdots \\
E_1[\phi^B_K(D^{(1)})]
\end{pmatrix}
\end{pmatrix}
\overset{d}{\rightarrow} N(0, \Sigma^K \odot V^B).
$$

Now we extend to unbounded functions $\phi_k$. For any $\epsilon > 0$ we can choose $0 < B < \infty $ sufficiently large such that the bounded function
$$\phi_k^B=\phi_k \mathbbm{1}_{ | \phi_k | \le B} + E_{f}[\phi_k | | \phi_k | > B] \mathbbm{1}_{ | \phi | > B},$$ suffices $E_f[|\phi_k(D^{(f)})-\phi_k^B(D^{(f)})|^2] \leq \epsilon$. Note that $E_f[\phi_k(D^{(f)})] = E_f[\phi_k^B(D^{(f)})]$ In the following, we will analyze the behavior of $\hat{E}_k[\phi_k^B(D^{(k)}) -\phi_k(D^{(k)})]$ and $E_1[\phi_k^B(D^{(1)}) -\phi_k(D^{(1)})]$. 

The following follows the proof of Theorem 3.3 of \cite{random_shift3}, with adjustments made for a random target and multiple sources. First, by Chebyshev, we have that conditional on the random shift for each $k$ for any $\epsilon >0$,
\begin{align*}
	P(|\hat{E}_k[\phi(D^{(k)})-\phi^B_k(D^{(k)})] - E_k[\phi(D^{(k)})-\phi^B_k(D^{(k)})]| \geq \epsilon | W^{(k)}) &\leq \frac{\frac{1}{n_k} \mathrm{Var}_k[\phi_k(D^{(k)})-\phi^B_k(D^{(k)})]}{\epsilon^2}\\
    &\leq \frac{\frac{1}{n_k} E_k[(\phi_k(D^{(k)})-\phi^B_k(D^{(k)}))^2]}{\epsilon^2}
\end{align*} 
Taking the expectation over the random shift on both sides, we get
\begin{align}\label{eqn: chebyshev1}
\begin{split}
	P(|\hat{E}_k[\phi(D^{(k)})-\phi^B_k(D^{(k)})] - E_k[\phi(D^{(k)})-\phi^B_k(D^{(k)})] | \geq \epsilon ) &\leq \frac{\frac{1}{n_k} E_f[(\phi_k(D^{(f)})-\phi^B_k(D^{(f)}))^2]}{\epsilon^2}\\
    &=\frac{\frac{1}{n_k} \mathrm{Var}_f[\phi_k(D^{(f)})-\phi^B_k(D^{(f)})]}{\epsilon^2}    
\end{split}
\end{align} 

Recall that $P_f(U\in I_j)=1/m$ and $D=h(U)$. By assumption, there exists a $w_f>0$ such that $W^{(k)}\geq w_f$ for any $k$, so 
\begin{align*}
&\bigg|E_k[\phi_k(D^{(k)})-\phi^B_k(D^{(k)})]-E_1[\phi_k(D^{(1)})-\phi^B_k(D^{(1)})]\bigg|\\
&= \bigg|\sum_{j=1}^m \frac{W^{(k)}_j}{ \sum_{j'} W_{j'}^{(k)}} E_f[\phi_k(h(U))-\phi^B_k(h(U))) | \{U\in I_j\}] -\sum_{j=1}^m \frac{W^{(1)}_j}{ \sum_{j'} W_{j'}^{(1)}} E_f[\phi_k(h(U))-\phi^B_k(h(U))) | \{U\in I_j\}] \bigg|\\ 
&\leq \frac{1}{mw_f}\bigg|\sum_{j=1}^m (W^{(k)}_j-W^{(1)}_j) E_f[\phi_k(h(U))-\phi^B_k(h(U))) | \{U\in I_j\}]\bigg|. 
\end{align*}
Since $W_j^{(k)}$'s are i.i.d.\ across $j$ for each $k$, Chebyshev combined with Jensen's gives
\begin{align} \label{eqn: chebyshev2}
\begin{split}
    &P(|E_k[\phi_k(D^{(k)})-\phi^B_k(D^{(k)})]-E_1[\phi_k(D^{(1)})-\phi^B_k(D^{(1)})]| \geq \epsilon)\\
    &\leq \frac{ \sum_{j=1}^m \mathrm{Var}[W^{(k)}_j-W^{(1)}_j] E_f[(\phi_k(h(U))-\phi^B_k(h(U)) | \{U\in I_j\})^2]}{w_f^2\epsilon^2m^2}\\
    &\leq \frac{m \mathrm{Var}[W^{(k)}_j-W^{(1)}_j] \mathrm{Var}_f[\phi_k(D^{(f)})-\phi^B_k(D^{(f)}))]}{w_f^2\epsilon^2m^2},
    \end{split}
\end{align}
since $E_f[\phi_k(D^{(f)})-\phi^B_k(D^{(f)})]=0$. Combining Eqns~\ref{eqn: chebyshev1} and~\ref{eqn: chebyshev2},
\begin{align*}
    &P(|\hat{E}_k[\phi(D^{(k)})-\phi^B_k(D^{(k)})] - E_1[\phi_k(D^{(1)})-\phi^B_k(D^{(1)})]| \geq \epsilon ) \\
    &\leq P(|\hat{E}_k[\phi(D^{(k)})-\phi^B_k(D^{(k)})] - E_k[\phi_k(D^{(k)})-\phi^B_k(D^{(k)})] | \geq \epsilon/2 ) \\
    &+ P(|E_k[\phi(D^{(k)})-\phi^B_k(D^{(k)})]- E_1[\phi_k(D^{(1)})-\phi^B_k(D^{(1)})]| \geq \epsilon/2)\\
    &\leq \frac{4}{\epsilon^2}\mathrm{Var}_f[(\phi_k(D^{(f)})-\phi^B_k(D^{(f)}))^2]\bigg(\frac{1}{n_k} + \frac{\mathrm{Var}[W^{(k)}_j-W^{(1)}_j] }{w_f^2m} \bigg)
\end{align*}

For any $\epsilon > 0$, $\epsilon' > 0$, there exists $B > 0$ large enough so that $ P( \sqrt{m} |\sum_{k=1}^K \alpha_k(\hat{E}_k[\phi_k(D^{(k)}) - \phi^B_k(D^{(k)})]- E_1[\phi_k(D^{(1)})-\phi^B_k(D^{(1)})])| \geq \epsilon ) \le \epsilon'$. Therefore, for any $\epsilon > 0$ and any $\epsilon'>0$, we can choose $B$ large enough so that for any $m$,
\begin{equation}\label{eqn: abs_bdd}
    P\left( (\alpha^\intercal (\Sigma^K \odot V^B)\alpha)^{-1/2} \sqrt{m} \left|\sum_{k=1}^K \alpha_k(\hat{E}_k[\phi_k(D^{(k)}) - \phi^B_k(D^{(k)})]- E_1[\phi_k(D^{(1)})-\phi^B_k(D^{(1)})]) \right| \geq \epsilon \right) \leq \epsilon'.
\end{equation}
Since $\phi^B(D^{(k)}) \rightarrow \phi(D^{(k)})$ as $B\rightarrow \infty$, $E_f[\phi^B(D^{(f)})^2] \rightarrow E_f[\phi(D^{(f)})^2]$ by dominated convergence. By assumption, $E_f[\phi^B(D^{(f)})]=E_f[\phi(D^{(f)})]=0$ so $\mathrm{Var}_f[\phi^B(D^{(f)})] \rightarrow \mathrm{Var}_f[\phi(D^{(f)})]$. This implies that for any $\omega' >0$  we have $|\sqrt{\mathrm{Var}_f[\phi(D^{(f)})]/\mathrm{Var}_f[\phi^B(D^{(f)})]}-1| < \omega'$ for $B$ large enough. Equivalently,   
$\Sigma^K_m\odot V\leq \Sigma^K_m\odot V^B (1+\omega)^2$  for any $\omega>0$ and $B$ large enough, where $\Sigma^K_m \coloneqq m(A\Sigma^W A^\intercal+ \mathrm{Diag}(1/n_1, \ldots, 1/n_K))$. 
Putting it all together, we have 
\begin{align*}
    &\limsup_{m,n_k \rightarrow \infty} P\bigg((\alpha^\intercal (\Sigma^K_m\odot V) \alpha)^{-1/2} \sqrt{m}\bigg(\sum_{k=1}^K \alpha_k(\hat{E}_k[\phi_k(D^{(k)})]-E_1[\phi_1(D^{(1)})])\bigg) \leq x \bigg)\\
    &\leq \limsup_{m,n_k \rightarrow \infty} P\bigg((\alpha^\intercal (\Sigma^K_m\odot V^B) \alpha)^{-1/2} \sqrt{m}\bigg(\sum_{k=1}^K \alpha_k(\hat{E}_k[\phi_k(D^{(k)})]-E_1[\phi_1(D^{(1)})])\bigg) \leq \max(x, (1+\omega)x) \bigg)\\
    &\leq \limsup_{m,n_k \rightarrow \infty} P\bigg((\alpha^\intercal (\Sigma^K_m\odot V^B) \alpha)^{-1/2} \sqrt{m}\bigg(\sum_{k=1}^K \alpha_k(\hat{E}_k[\phi^B_k(D^{(k)})]-E_1[\phi^B_1(D^{(1)})])\bigg) \leq \max(x, (1+\omega)x) + \epsilon \bigg)\\
    &+  \limsup_{m,n_k \rightarrow \infty} P\left((\alpha^\intercal (\Sigma^K_m\odot V^B) \alpha)^{-1/2} \sqrt{m} \left|\sum_{k=1}^K \alpha_k(\hat{E}_k[\phi_k(D^{(k)}) - \phi^B_k(D^{(k)})]- E_1[\phi_k(D^{(1)})-\phi^B_k(D^{(1)})]) \right| \geq \epsilon \right)\\
    & \leq \Phi(\max(x, (1+\omega)x)+\epsilon)+\epsilon',
\end{align*}
where the last inequality uses Eqns \ref{eqn: bdd_distr} and \ref{eqn: abs_bdd}. 
Since $\epsilon > 0$, $\epsilon'>0$ can be chosen arbitrary small for $B \rightarrow \infty$,
\begin{equation*}
    \limsup_{m,n_k \rightarrow \infty} P\bigg((\alpha^\intercal (\Sigma^K_m\odot V) \alpha)^{-1/2} \sqrt{m}\bigg(\sum_{k=1}^K \alpha_k(\hat{E}_k[\phi_k(D^{(k)})]-E_1[\phi_1(D^{(1)})])\bigg) \leq x \bigg) \leq \Phi(x).
\end{equation*}
By the same argument as above,
\begin{equation}
    \liminf_{m,n_k \rightarrow \infty} P\bigg((\alpha^\intercal (\Sigma^K_m\odot V) \alpha)^{-1/2} \sqrt{m}\bigg(\sum_{k=1}^K \alpha_k (\hat{E}_k[\phi_k(D^{(k)})]-E_1[\phi_1(D^{(1)})])\bigg) \leq x \bigg) \geq\Phi(x).
\end{equation}
Therefore, by the Cramér-Wold device, 
$$
\sqrt{m}
\begin{pmatrix}
\begin{pmatrix}
\frac{1}{n_1} \sum_{i=1}^{n_1} \phi_1(D^{(1)}_i) \\
\vdots \\
\frac{1}{n_K} \sum_{i=1}^{n_K} \phi_K(D^{(K)}_i)
\end{pmatrix}
-
\begin{pmatrix}
E_1[\phi_1(D^{(1)})] \\
\vdots \\
E_1[\phi_K(D^{(1)})]
\end{pmatrix}
\end{pmatrix}
\overset{d}{\rightarrow} N(0, \Sigma^K \odot V).
$$

\end{proof}

\subsection{Consistency of \texorpdfstring{$\hat{\theta}$}{theta}}

\begin{lemma}[Consistency]\label{lemma: mestimator_consistency}

Let $\Theta$ be a compact set and $\ell(\cdot)$ be bounded and continuous in $\theta$. Define for any set of non-negative weights $\beta_k \ge 0$ with $\sum_k \beta_k =1$ the estimator
\begin{equation*}
        \hat{\theta} = \argmin_{\theta \in \Theta} \ \sum_{k=1}^K \beta_k \frac{1}{n_k} \sum_{i=1}^{n_k} \ell(\theta,X_i^{(k)}, y_i^{(k)}),
\end{equation*}
and the unique minimizer of the population risk under $P_f$,
\begin{equation*}
    \theta_f = \argmin_{\theta \in \Theta} \ E_f[\ell(\theta,X^{(f)}, y^{(f)})].
\end{equation*}
Then as $m, n_k \rightarrow \infty$, where $m/n_k \rightarrow c_k > 0$, $\hat{\theta}\overset{p}{\rightarrow} \theta_f$. 

Furthermore, if $\ell(\theta,X,y)$ is twice-differentiable, $E_f[\nabla^2\ell(\theta,X^{(f)}, y^{(f)})]$ is L-Lipschitz in an $\epsilon$-ball around $\theta_f$, and $E_f[\nabla^2\ell(\theta,X^{(f)}, y^{(f)})]^{-1}$ is integrable with the third order partial derivatives of $\ell(\theta, X,y)$ dominated by a fixed function $h(\cdot)$ in an $\epsilon'$-ball around $\theta_f$, we have $\hat{\theta} = \theta_f + O_p(1/\sqrt{n_k})$. 
\end{lemma}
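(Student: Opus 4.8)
The plan is to establish the two claims in order: consistency via a uniform law of large numbers combined with the standard argmax consistency theorem, and then the $O_p(1/\sqrt{n_k})$ rate via a Taylor expansion of the first-order condition. Throughout, the engine is the distributional CLT of Lemma~\ref{lemma:samp_asymp}, which already packages together the sampling and distributional fluctuations. Write the random objective as $M_n(\theta) = \sum_{k=1}^K \beta_k \hat{E}_k[\ell(\theta, X^{(k)}, y^{(k)})]$ and its deterministic candidate limit as $M(\theta) = E_f[\ell(\theta, X^{(f)}, y^{(f)})]$. First I would establish pointwise convergence: fixing $\theta$ and applying Lemma~\ref{lemma:samp_asymp} with $\phi_k = \ell(\theta,\cdot)$ gives $\hat{E}_k[\ell(\theta,\cdot)] - E_1[\ell(\theta,\cdot)] = O_p(1/\sqrt{m})$, and the distributional fluctuation $E_1[\ell(\theta,\cdot)] - E_f[\ell(\theta,\cdot)] = O_p(1/\sqrt{m})$ holds by the same auxiliary argument invoked in that proof. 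Since the $\beta_k$ are bounded and sum to one, summing yields $M_n(\theta) - M(\theta) = O_p(1/\sqrt{m}) \to 0$.

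Next I would upgrade pointwise convergence to uniform convergence $\sup_{\theta\in\Theta}|M_n(\theta) - M(\theta)| \overset{p}{\rightarrow} 0$. Because $\ell$ is bounded and continuous on the compact set $\Theta$, the family $\{\ell(\theta,\cdot):\theta\in\Theta\}$ has a constant envelope and is uniformly equicontinuous in $\theta$, so a finite $\epsilon$-net argument reduces uniform control to pointwise control at finitely many points. To handle the fact that the centering means $E_k[\cdot]$ are themselves random, I would decompose $\hat{E}_k[\ell(\theta,\cdot)] - E_f[\ell(\theta,\cdot)]$ into a sampling part $\hat{E}_k - E_k$ and a distributional part $E_k - E_f$, controlling the former by a conditional Glivenko--Cantelli argument (the weights bounded away from zero keep the envelope well-behaved) and the latter by equicontinuity of $\theta\mapsto E_k[\ell(\theta,\cdot)]$ together with the pointwise distributional law of large numbers underlying Lemma~\ref{lemma:samp_asymp}. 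With uniform convergence in hand and $\theta_f$ the unique, hence well-separated, minimizer of the continuous function $M$ over the compact $\Theta$, consistency $\hat{\theta}\overset{p}{\rightarrow}\theta_f$ follows from the standard M-estimator argmax consistency theorem.

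For the rate, I would linearize the first-order condition exactly as in the opening display of the proof of Theorem~\ref{thm: pred_rho}. From $0 = \sum_k\beta_k\hat{E}_k[\nabla\ell(\hat{\theta},\cdot)]$, Taylor expansion around $\theta_f$ gives $0 = \sum_k\beta_k\hat{E}_k[\nabla\ell(\theta_f,\cdot)] + \big[\sum_k\beta_k\hat{E}_k[\nabla^2\ell(\theta_f,\cdot)]\big](\hat{\theta} - \theta_f) + R$, where the remainder $R$ is quadratic in $\hat{\theta}-\theta_f$ with coefficient bounded by $\sum_k\beta_k\frac{1}{n_k}\sum_i h(D_i^{(k)}) = O_p(1)$, using the $L^1(P_f)$ domination of the third derivatives. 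The gradient term is $O_p(1/\sqrt{m})$ by Lemma~\ref{lemma:samp_asymp} applied to $\phi_k = \nabla\ell(\theta_f,\cdot)$, since $E_f[\nabla\ell(\theta_f,\cdot)] = 0$ at the $P_f$-minimizer; the Hessian term converges in probability to the invertible matrix $E_f[\nabla^2\ell(\theta_f,\cdot)]$, hence is invertible with probability tending to one with inverse $O_p(1)$. Solving for $\hat{\theta}-\theta_f$ and absorbing the quadratic remainder (negligible by the consistency just proven) yields $\hat{\theta} - \theta_f = O_p(1/\sqrt{m}) = O_p(1/\sqrt{n_k})$, the last equality using $m \asymp n_k$.

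I expect the main obstacle to be the uniform law of large numbers in the consistency step. Unlike the classical single-source setting, the objective superposes sampling noise on distributional noise, and the population means against which one centers are random; the usual Glivenko--Cantelli argument must therefore be run conditionally on the weights and then recombined with a uniform bound on the distributional fluctuation $E_k - E_f$. By contrast, the rate step is a routine expansion once consistency and the distributional CLT are available, with the only care needed being the $O_p(1)$ control of the inverse Hessian and the quadratic remainder.
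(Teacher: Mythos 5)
Your proposal follows essentially the same route as the paper's proof: pointwise convergence via the distributional CLT (Lemma~\ref{lemma:samp_asymp}) rather than a classical LLN, upgrade to uniform convergence over the compact $\Theta$ (the paper cites Example 19.8 of van der Vaart, which is the bracketing version of your finite-net argument), the well-separation/argmax-consistency theorem for $\hat{\theta}\overset{p}{\rightarrow}\theta_f$, and a Taylor expansion of the first-order condition, exactly as in the paper's proof of Theorem~\ref{thm: pred_rho}, for the $O_p(1/\sqrt{n_k})$ rate. One minor caution: boundedness plus pointwise continuity in $\theta$ does \emph{not} yield equicontinuity uniform in the data point, so your net argument should instead use Wald-type brackets $\sup_{\theta'\in B(\theta,\delta)}\ell(\theta',\cdot)$ and $\inf_{\theta'\in B(\theta,\delta)}\ell(\theta',\cdot)$ with dominated convergence (precisely the content of the cited Glivenko--Cantelli result), combined with your conditional-on-weights decomposition to handle the random $P_k$.
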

\begin{proof}
 First, we note that we have pointwise convergence for any fixed $\theta$, 
\begin{equation}\label{not_lln}
    \frac{1}{n_k} \sum_{i=1}^{n_k} \ell(\theta,X_i^{(k)}, y_i^{(k)}) -E_f[\ell(\theta,X^{(f)}, y^{(f)})] = O_p(1/\sqrt{n_k}),
\end{equation}
by Lemma~\ref{lemma:samp_asymp}, rather than by Law of Large Numbers since $P_k$ is not fixed. By assumption of $\Theta$ compact and $\ell(\cdot)$ continuous in $\theta$ with $\frac{1}{n_k} \sum_{i=1}^{n_k} \ell(\theta,X_i^{(k)}, y_i^{(k)}) \overset{p}{\rightarrow} E_f[\ell(\theta,X^{(f)}, y^{(f)})]$, $\ell(\cdot)$ belongs to the Glivenko-Cantelli class by Example 19.8 of \cite{vdv}. Therefore, we have that 
\begin{equation*}
\max_{k=1,\ldots,K}    \sup_{\theta \in \Theta} \left| \frac{1}{n_k} \sum_{i=1}^{n_k} \ell(\theta,X_i^{(k)}, y_i^{(k)}) -E_f[\ell(\theta,X^{(f)}, y^{(f)})] \right| \overset{p}{\rightarrow} 0.
\end{equation*}
Furthermore, by assumption of $\theta_f$ being a unique minimizer of the population risk, we have the well-separation condition, $\inf_{\theta: d(\theta, \theta_f) \geq \epsilon} E_f[\ell(\theta,X^{(f)}, y^{(f)})] > E_f[\ell(\theta_f,X^{(f)}, y^{(f)})]$. The remainder of the proof follows by the proof of consistency of M-estimators as in Theorem 5.7 of \cite{vdv}. 

By uniform convergence, we have that $ \sum_k \beta_k \frac{1}{n_k} \sum_{i=1}^{n_k} \ell(\theta_f,X_i^{(k)}, y_i^{(k)}) = E_f[\ell(\theta_f, X^{(f)}, y^{(f)})] +o_p(1)$ and since $\theta_f$ is the minimizer of the population risk, $ \sum_k \beta_k \frac{1}{n_k} \sum_{i=1}^{n_k} \ell(\hat{\theta},X_i^{(k)}, y_i^{(k)}) \leq \sum_k \beta_k \frac{1}{n_k} \sum_{i=1}^{n_k} \ell(\theta_f,X_i^{(k)}, y_i^{(k)}) + o_p(1)$. Therefore, we have $\sum_k \beta_k \frac{1}{n_k} \sum_{i=1}^{n_k} \ell(\hat{\theta},X_i^{(k)}, y_i^{(k)}) \leq  E_f[\ell(\theta_f,X^{(f)}, y^{(f)})] + o_p(1)$ and 
\begin{align*}
    E_f[\ell(\hat{\theta},X^{(f)}, y^{(f)})] - E_f[\ell(\theta_f,X^{(f)}, y^{(f)})] &\leq E_f[\ell(\hat{\theta},X^{(f)}, y^{(f)})]- \sum_k \beta_k \frac{1}{n_k} \sum_{i=1}^{n_k} \ell(\hat{\theta},X_i^{(k)}, y_i^{(k)}) +o_p(1)\\
    &\leq \max_k \sup_\theta \left|\frac{1}{n_k} \sum_{i=1}^{n_k} \ell(\theta,X_i^{(k)}, y_i^{(k)})-E_f[\ell(\theta,X^{(f)}, y^{(f)})]\right| + o_p(1)\\
    &=  o_p(1),
\end{align*}
by uniform convergence. By the well-separability condition, for every $\epsilon >0$ there exists a number $\eta>0$ such that $E_f[\ell(\theta, X^{(f)}, y^{(f)})] > E_f[\ell(\theta_f, X^{(f)}, y^{(f)})] + \eta$  for every $\theta$ with $d(\theta, \theta_f) \geq \epsilon$. This implies that the event $\{d(\hat{\theta}, \theta_f) \geq \epsilon\} \subseteq \{E_f[\ell(\hat{\theta}, X^{(f)}, y^{(f)})] > E_f[\ell(\theta_f, X^{(f)}, y^{(f)})]+\eta\}$, so 
\begin{equation*}
    P(d(\hat{\theta}, \theta_f) \geq \epsilon) \leq P(E_f[\ell(\hat{\theta}, X^{(f)}, y^{(f)})] > E_f[\ell(\theta_f, X^{(f)}, y^{(f)})]+\eta) \overset{p}{\rightarrow} 0.
\end{equation*}
Furthermore, by assumption,
$\|\hat{E}_{k}[\nabla^3 \ell(\theta_f,X^{(k)}, y^{(k)})]\| \leq \frac{1}{n_k}\sum_{i=1}^{n_k}h(D_{i}^{(k)}) = O_p(1)$
so by Section 5.3 of \cite{vdv}, a Taylor expansion of $\frac{1}{n_k} \sum_{i=1}^{n_k} \ell(\theta,X_i^{(k)}, y_i^{(k)})$ around $\theta_f$ combined with Lemma~\ref{lemma:samp_asymp} gives us that $\hat{\theta}=\theta_f+O_p(1\sqrt{n_k})$.
\end{proof}

\begin{corollary}\label{cor: mestimator_consistency_theta0}

Define $\hat{\theta}$ as in Lemma~\ref{lemma: mestimator_consistency}
and the unique minimizer of the target population risk under $P_k$
\begin{equation*}
    \theta_k = \argmin_{\theta \in \Theta} \ E_k[\ell(\theta,X^{(k)}, y^{(k)})].
\end{equation*}
Under the setting of Theorem~\ref{thm: pred_rho}, $\hat{\theta} = \theta_k +O_p(1/\sqrt{m})$. 
\end{corollary}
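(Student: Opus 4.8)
The plan is to bound $\hat\theta - \theta_k$ by passing through the common reference point $\theta_f$, writing $\hat\theta - \theta_k = (\hat\theta - \theta_f) - (\theta_k - \theta_f)$ and showing each difference is $O_p(1/\sqrt m)$. The first difference is immediate from the rate already established in Lemma~\ref{lemma: mestimator_consistency}, namely $\hat\theta = \theta_f + O_p(1/\sqrt{n_k})$: since $m/n_k \to c_k > 0$ we have $1/\sqrt{n_k} = O(1/\sqrt m)$, hence $\hat\theta = \theta_f + O_p(1/\sqrt m)$.

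For the second difference I would mirror the derivation of Eqn~\eqref{app: thm1_theta0}. Since $\theta_k$ minimizes the random population risk $E_k[\ell(\theta, X^{(k)}, y^{(k)})]$, it satisfies the first-order condition $E_k[\nabla\ell(\theta_k, X^{(k)}, y^{(k)})] = 0$. Taylor-expanding this around $\theta_f$ and inverting the Hessian (invertible by the conditions of Appendix~\ref{app: thm1_conditions}, with $E_k[\nabla^2\ell(\theta_f)] = E_f[\nabla^2\ell(\theta_f)] + O_p(1/\sqrt m)$ by Lemma~\ref{lemma:samp_asymp}) yields
\[
\theta_k - \theta_f = -\big[E_f[\nabla^2\ell(\theta_f, X^{(f)}, y^{(f)})]\big]^{-1} E_k[\nabla\ell(\theta_f, X^{(k)}, y^{(k)})] + O_p(1/m),
\]
with the third-order remainder controlled by the dominating function $h$. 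The key input is $E_k[\nabla\ell(\theta_f)] = O_p(1/\sqrt m)$: because $\theta_f$ minimizes $E_f[\ell]$ we have $E_f[\nabla\ell(\theta_f, X^{(f)}, y^{(f)})] = 0$, while the distributional fluctuation $E_k[\nabla\ell(\theta_f)] - E_f[\nabla\ell(\theta_f)]$ is $O_p(1/\sqrt m)$ by the population (distributional) part of the CLT underlying Lemma~\ref{lemma:samp_asymp}, applied coordinatewise to $\phi = \partial_j\ell(\theta_f, \cdot)$. Together these give $\theta_k = \theta_f + O_p(1/\sqrt m)$, and subtracting the two expansions yields $\hat\theta - \theta_k = O_p(1/\sqrt m)$.

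A preliminary step the expansion requires is consistency $\theta_k \overset{p}{\rightarrow} \theta_f$, so that the Taylor argument can be localized to the $\epsilon$-ball where the Lipschitz and domination conditions hold. This follows from the same M-estimation reasoning as in Lemma~\ref{lemma: mestimator_consistency}, now treating $E_k[\cdot]$ as a random objective: Lemma~\ref{lemma:samp_asymp} gives the pointwise statement $E_k[\ell(\theta)] = E_f[\ell(\theta)] + O_p(1/\sqrt m)$, uniform convergence over the compact $\Theta$ follows from the Glivenko-Cantelli property, and well-separation of the unique minimizer $\theta_f$ forces $\theta_k \to \theta_f$.

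The main obstacle is the second difference, i.e.\ showing $\theta_k = \theta_f + O_p(1/\sqrt m)$, and within it the rate $E_k[\nabla\ell(\theta_f)] = O_p(1/\sqrt m)$. This is a pure distributional-perturbation statement (no sampling noise) and is precisely the quantity bounded by the auxiliary distributional CLT used in Lemma~\ref{lemma:samp_asymp}; the remaining bookkeeping, namely uniform control of the Taylor remainder and of the perturbed Hessian inverse, is routine given the regularity conditions.
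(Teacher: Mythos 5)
Your proof is correct and takes essentially the same approach as the paper's: the paper also decomposes $\hat\theta - \theta_k = (\hat\theta - \theta_f) - (\theta_k - \theta_f)$, bounds the first term by Lemma~\ref{lemma: mestimator_consistency} together with $m = O(n_k)$, and obtains $\theta_k = \theta_f + O_p(1/\sqrt{m})$ by rerunning the M-estimation argument with the distributional fluctuation $E_k[\ell(\theta,X^{(k)},y^{(k)})] - E_f[\ell(\theta,X^{(f)},y^{(f)})] = O_p(1/\sqrt{m})$ playing the role of sampling noise. The only difference is one of explicitness: the paper compresses your consistency-plus-Taylor-expansion treatment of $\theta_k - \theta_f$ into the phrase ``by the same argument as Lemma~\ref{lemma: mestimator_consistency},'' so your write-up is a more detailed rendering of the identical argument.
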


\begin{proof}
We first note that by the same argument as Lemma~\ref{lemma: mestimator_consistency}, we have that $\theta_k = \theta_f + O_p(1/\sqrt{m})$ since $E_k[\ell(\theta, X^{(k)}, y^{(k)})] -E_f[\ell(\theta,X^{(f)}, y^{(f)})] = O_p(1/\sqrt{m})$ by auxiliary Lemma 2 of \cite{random_shift2}.  Then we have that $$\hat{\theta} - \theta_k = \theta_f+O_p(1/\sqrt{n_k})-\theta_f-O_p(1/\sqrt{m}) = O_p(1/\sqrt{m})$$ as $m,n \rightarrow \infty$ since $m=O(n_k)$.
\end{proof}

\subsection{Proof of Remark \ref{confint}}\label{app: ci_transf}
\begin{proof}
The function $g(\rho^2)$ is the variance stabilizing transformation given by
\begin{align*}
    g(\rho^2) = \int_{0}^{\rho^2} \frac{1}{2\sqrt{x}(1-x)}dx = \frac{1}{2}\log \left(\frac{1+\rho}{1-\rho}\right)
\end{align*}
Therefore, by the delta method, $\sqrt{L} (g^{-1}(\rho^2_{\text{Gaussian},L}) - g^{-1}(\rho^2_{K+1|1,2, \ldots,K}) )\rightarrow N(0, 1)$,
and asymptotic validity follows for $$CI \coloneqq [g^{-1}(g(\hat{\rho}^2_{K+1|1, \ldots,K})-L^{-1/2}z_{1-\frac{\alpha}{2}}), g^{-1}(g(\hat{\rho}^2_{K+1|1, \ldots,K})+L^{-1/2}z_{1-\frac{\alpha}{2}})]$$ since $\rho \mapsto g(\rho)$ is monotone:
\begin{align*}
    P(\rho^2_{K+1|1,2, \ldots,K}) \in CI) &\geq P(g^{-1}(\rho^2_{K+1|1,2, \ldots,K}) \in [g^{-1}(\rho^2_{\text{Gaussian},L}) \pm L^{-1/2}z_{1-\frac{\alpha}{2}}])\\
    &=P(|L^{-1/2}(g^{-1}(\rho^2_{\text{Gaussian},L}) - g^{-1}(\rho^2_{K+1|1,2, \ldots,K}))|\leq z_{1-\frac{\alpha}{2}}) \\
    &\rightarrow 1-\alpha.
\end{align*}
This completes the proof of Remark \ref{confint}.

\end{proof}

\section{Optimal sampling under constraints}\label{sec: optimal_samp}

In Section \ref{sec: partial_cor}, we discussed how much an additional data set would reduce excess risk on a target distribution. Here, we derive optimal sampling strategies under various constraints following a similar framework.

\subsection{Optimal sampling under size constraint}\label{sec: size_constraint}

We layer a total sample size constraint to our problem of data collection in the following illustration.  Suppose we can only recruit a total of $N$ 
patients for a clinical trial from $K$ different populations. How can we optimally sample from  $K$ populations? Formally, we want to minimize the following excess risk,
\begin{align}\label{eqn: sample_constraint_obj}
\begin{split}
     n_1^*,\ldots,n_K^*, \beta^*_1,\ldots,\beta_K^* &= \argmin_{n_k, \beta_k}\ E\left[E_1\bigg[\ell \bigg(  \hat \theta(\beta_{1:K}, n_{1:K}), X^{(1)}, y^{(1)}\bigg)-\ell(\theta_1, X^{(1)},y^{(1)})\bigg] \right]\\
    &\text{subject to} \quad \sum_{k=1}^K n_k = N, \sum_{k=1}^K \beta_k = 1 
    \end{split}
\end{align}

First for fixed $\beta$, we solve for $n_k$. In the following, for simplicity, we will treat the sample sizes $n_k$ as continuous.

Without loss of generality, let $\mathrm{Var}_f(\phi_1(D^{(f)}), \ldots, \phi_K(D^{(f)}))=I$. Denote $n \coloneqq (n_1, n_2, \ldots, n_K)$ and $\beta \coloneqq (\beta_1, \beta_2, \ldots, \beta_K)$. By Eqn \ref{eq:weak-convergence-excess-risk} (the Taylor expansion of the excess risk in Appendix \ref{app: thm1}), solving Eqn.~\ref{eqn: sample_constraint_obj} is asymptotically equivalent to solving %
\begin{align}\label{n_constraint}
\begin{split}
     n^*, \beta^* &= \argmin_{n, \beta }\ \mathbf{\beta}^\intercal(A\Sigma^W A^\intercal+ \text{Diag}(1/n_1, \ldots,1/n_K)) \mathbf{\beta}  \\
     &=\argmin_{n^\intercal \mathbf{1} = N, \mathbf{\beta}^\intercal \mathbf{1}=1}\ \mathbf{\beta}^\intercal (A\Sigma^W A^\intercal) \mathbf{\beta} + \sum_{k=1}^K \frac{\beta^2_k}{n_k},
    \end{split}
\end{align}
where $A$ and $\Sigma^W$ are as defined in Lemma \ref{lemma:samp_asymp}. Consider a relaxation of the sample size constraint and treat it as continuous. Let $ (n_1^*, n_2^*, \ldots, n_K^*)$ be an optimal solution to this problem. Then we can define a feasible solution by letting $\tilde{n}_{k'}=n_{k'}^*-\epsilon$, another coordinate $\tilde{n}_{k''}=n_{k''}^*+\epsilon$, and all other coordinates $\tilde{n}_{k}=n_k^*$ for some $\epsilon >0$. Then for a fixed $\beta$, 
\begin{align*}
    &\lim_{\epsilon \rightarrow 0} \ \frac{1}{\epsilon} \bigg(\frac{\beta_{k'}^2}{n^*_{k'}+\epsilon} + \frac{\beta_{k''}^2}{n^*_{k''}} - \frac{\beta_{k'}^2}{n^*_{k'}-\epsilon} - \frac{\beta_{k''}^2}{n^*_{k''} } \bigg) = 0\\
    &\Leftrightarrow |\beta_{k'}|=n_{k'}\frac{|\beta_{k''}|}{n_{k''}}
\end{align*} Since $\sum_{k=1}^K n_k = N$, summing over all $k'$ gives us that for any $k$, 
\begin{align}\label{opt_n_N}
    \frac{\sum_{k'} |\beta_{k'}|}{N} = \frac{|\beta_{k}|}{n_k},
\end{align}
giving us the optimal sample size
\begin{equation*}
    n_k^*=\frac{|\beta_{k}|}{\sum_{k'} |\beta_{k'}|}N.
\end{equation*} Substituting $n^*_k$ into the objective ~\ref{n_constraint} gives the simplified convex optimization problem.
\begin{align}
     \beta^* &= \argmin_{\mathbf{\beta}^\intercal \mathbf{1}=1}\ \beta^\intercal (A\Sigma^W A^\intercal) \beta + \frac{1}{N}\bigg(\sum_{k=1}^K |\beta_k|\bigg)^2.
\end{align}

By the above display, we can form a plug-in estimate of the optimal sampling sizes $n_k^*$ based on the optimal data set weights $\beta_k^*$.

In the following, let us consider the common special case that all optimal data set weights $\beta_k^*$ are positive. 
In that case, since $\sum_k \beta_k = 1$, we have
\begin{equation*}
    \frac{n_k^*}{N} = \beta_k^*.
\end{equation*}

Thus, in this case, the optimal \emph{sampling proportions }$n_k^*/N$ are equal to the optimal \emph{data set weights} $\beta_k^*$. Estimation of the optimal data set weights $\beta^*$ is discussed in  \cite{random_shift2}.

\subsection{Optimal sampling under budget constraint}\label{subsection: bud_constr}

Instead of a size constraint, we now consider a budget constraint where we can either sample from the target $P_1$ for a price of $\kappa_1 > 0$ per observation $(X_i^{(1)},y_{i}^{(1)})$ or sample from a cheaper but low quality data source $P_2$ at a cost of $\kappa_2 > 0$ per observation $(X_i^{(2)},y_{i}^{(2)})$. 
Recall that our weighted estimator is defined as 
\begin{equation*}
    \hat{\theta}(\beta_{1},\beta_2, n_{1},n_2) = \argmin_{\theta \in \Theta}   \beta_1 \frac{1}{n_1} \sum_{i=1}^{n_1} \ell(\theta,X_{i}^{(1)}, y_{i}^{(1)}) + \beta_2 \frac{1}{n_2} \sum_{i=1}^{n_2} \ell(\theta,X_{i}^{(2)}, y_{i}^{(2)}).
\end{equation*}
 
We are interested in the sample size and weights that minimize the excess risk:
\begin{align*}\label{opt_budget}
\begin{split}
    (n_1^*,n_2^*,\beta_1^*,\beta_2^*) &= \argmin_{n_1,n_2, \beta_1,\beta_2}\ E\left[E_1\left[\ell (\hat{\theta}(\beta_{1:2},n_{1:2}), X^{(1)},y^{(1)})-\ell(\theta_1, X^{(1)},y^{(1)}) \right] \right], \\
    &\text{subject to} \quad \kappa_1 n_1+\kappa_2 n_2 \leq C.
    \end{split}
\end{align*}
Here, we take an expectation of $(X^{(1)},y^{(1)})$ with respect to $P_1$, reflected by the inner expectation $E_1$. We then take expectation with respect to both sampling and distribution randomness of the estimator $\hat{\theta}$, reflected by the outer expectation $E$.

For fixed $n_1$ and $n_2$, we can solve for $\beta_1^*,\beta_2^*$. Using the asymptotic distribution of Lemma~\ref{lemma:samp_asymp}, we (approximately) have by Eqn \ref{eqn: optimal_beta} of Appendix \ref{app: thm1},

$$\beta^*_2 = \frac{\frac{1}{\Sigma^W_{2,2}+\Sigma^W_{1,1}-2\Sigma^W_{1,2}+1/n_2}}{n_1+\frac{1}{\Sigma^W_{2,2}+\Sigma^W_{1,1}-2\Sigma^W_{1,2}+1/n_2}},$$ 
leading to mean asymptotic excess risk (ignoring lower-order terms, see Eqn \ref{eq:weak-convergence-excess-risk} of Appendix \ref{app: thm1}) of
\begin{equation}\label{eq:excess-risk}
    \frac{1}{n_1 + \frac{1}{\Sigma^W_{2,2}+\Sigma^W_{1,1}-2\Sigma^W_{1,2}+1/n_2}} \cdot \text{Trace}(E_f(\nabla^2 \ell(\theta_f,X^{(f)},y^{(f)}))^{-1} \mathrm{Var}_f(\nabla \ell(\theta_f,X^{(f)},y^{(f)}))).
\end{equation}
Therefore, finding the optimal sample sizes $n_1^*, n_2^*$ can be done by solving the following problem: 

\begin{equation*}
   \arg \max_{n_1,n_2}  \left( n_1 + \frac{1}{\Sigma^W_{2,2}+\Sigma^W_{1,1}-2\Sigma^W_{1,2}+1/n_2}  \right) \text{ subject to } \kappa_1 n_1 + \kappa_2 n_2 \le C
\end{equation*}
Based on an estimate of $\Sigma^W$, which we will provide in Proposition \ref{prop: consistency_cov}, this can be solved using standard optimization algorithms such as gradient descent or Newton's method.

\subsection{Incremental value of data} 

In the following, we will discuss how adding one data point from each distribution will increase the target error. Consider the case when $W^{(1)}=1$, i.e. the target distribution is not perturbed from $P_f$. If we add a data point from distribution $P_1$, we increase the effective sample size in Equation~\eqref{eq:excess-risk} by approximately

\begin{equation*}
    \partial_{n_2} \left( \frac{1}{ \Sigma^W_{2,2} + \frac{1}{n_2}} + n_1  \right) = \frac{1}{\left(\Sigma^W_{2,2} \cdot n_2 + 1\right)^2 }
\end{equation*}
 Thus, we can either pay $ \$ \kappa_1 $ to increase the effective sample size by $1$ or  pay $ \$ \kappa_2$ to increase the effective sample size by approximately $ \frac{1}{(\Sigma^W_{2,2} \cdot n_2 + 1)^2}$. Note that $\frac{1}{(\Sigma^W_{2,2} \cdot n_2 + 1)^2} < 1$, so the additional data point from the low-quality source will always have less value than a data point from the high-quality source. Therefore, we would be willing to buy data from $P_2$, as long as the increase in effective sample size per dollar is larger for $P_2$ than for $P_1$, i.e.,

\begin{equation*}
    \frac{1}{\kappa_1} < \frac{1}{\kappa_2} \frac{1}{\left(\Sigma^W_{2,2} \cdot n_2 + 1\right)^2 },
\end{equation*}
which is equivalent to

\begin{equation*}
    \underbrace{\frac{\kappa_2}{\kappa_1}}_{\text{ratio of price per data point}}  \left( \underbrace{\Sigma^W_{2,2}}_{\text{data quality}} \cdot \underbrace{n_2}_{\text{data quantity}} + 1 \right)^2 < 1.
\end{equation*}
Based on an estimate of $\Sigma_{2,2}^W$, one can use this equation to estimate whether sampling the next data point from $P_1$ or $P_2$ is expected to lead to a larger drop in out-of-distribution error. We will provide a consistent estimator of $\Sigma_{2,2}^W$ in Proposition \ref{prop: consistency_cov}.

This equation has several implications on the optimal sampling strategy. First, we would not buy an additional observation from the low-quality source $P_2$ unless $\frac{\kappa_2}{\kappa_1} < 1$, i.e., the low-quality data source is cheaper than the high-quality data source. If $\Sigma^W_{2,2} > 0$, there is an interplay of data quality and data quantity. If the distribution shift is small ($\Sigma^W_{2,2} \approx 0$), then we tend to prefer sampling from the low-quality data source $P_2$. As $n_2 \rightarrow \infty$, there is a point where sampling from the low-quality data source $P_2$ increases the effective sample size less per dollar spent than sampling from the target data source $P_1$.

\subsection{Application: predicting U.S. income}\label{sec:empirical-size-constraint}
We now present an example to demonstrate our method for optimal sampling under size constraints described in Appendix \ref{subsection: bud_constr}. Suppose our goal is to predict the (log) income of an adult in the U.S., residing in any of the $50$ states or Puerto Rico. However, we can only collect $N$ total data samples from CA, PA, NY, FL, or TX due to resource constraints. How should we sample given this constraint? Intuitively, one option is to collect all samples from a single state, focusing on the state we heuristically deem most representative of the U.S. Alternatively, we could sample equally from each state to create a diverse training set. Figure~\ref{fig: us_opt_samp} compares the MSE on $20,000$ test observations for these strategies and our proposed optimal sampling method. While equal sampling across all states tends to improve performance relative to sampling from just one state, since the target is the entire U.S., the optimal sampling method drastically outperforms both. 

\begin{figure}[ht]
   \centering
   \includegraphics[width=0.55\linewidth]{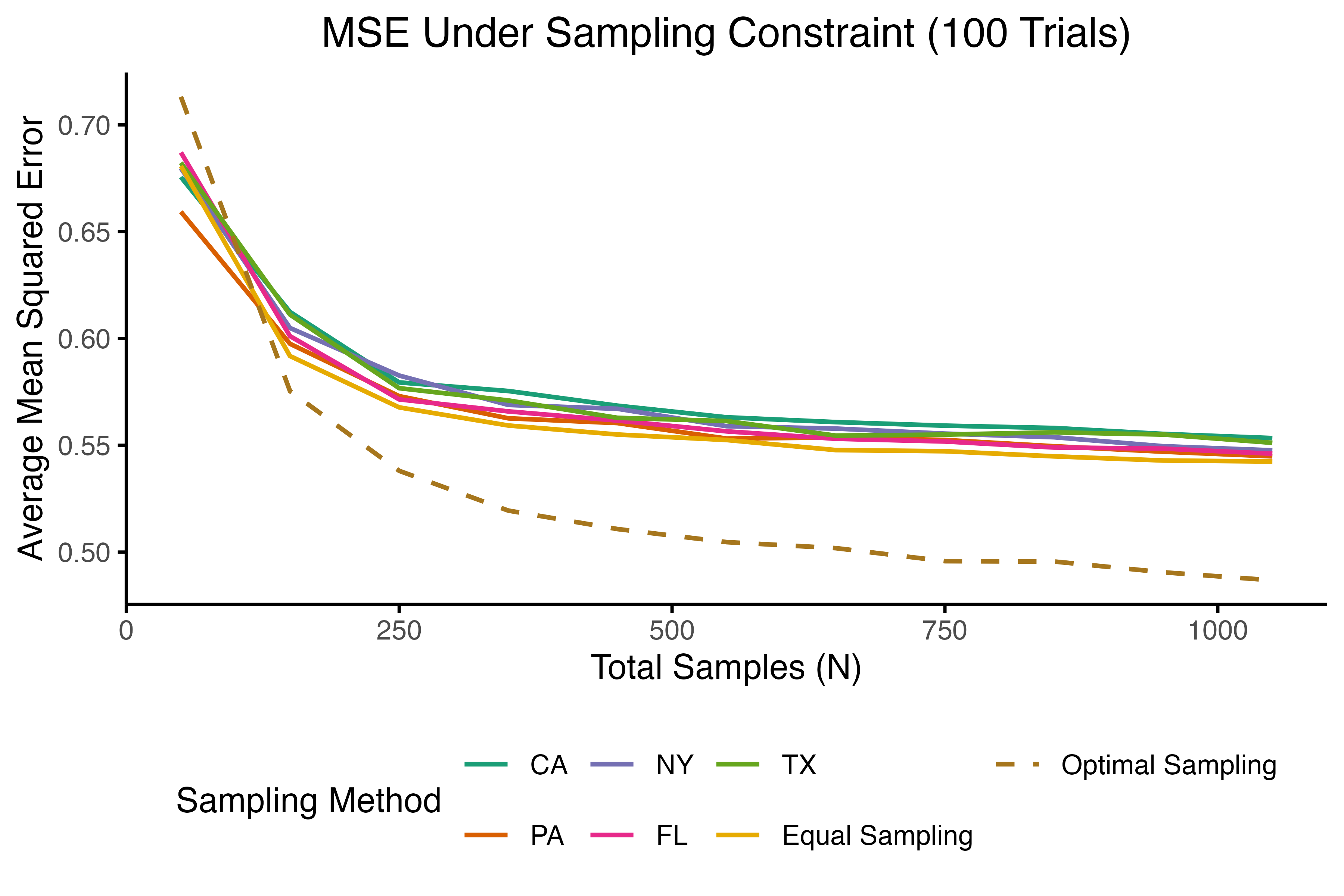}
   \caption{Results from prediction of an individual's $2018$ income in the U.S. using data from different states. Sampling from individual states for training does not generalize well when the test distribution is the US. While equal sampling of the five states is an improvement, our optimal sampling method outperforms the other strategies. Our procedure does not use outcome data for determining the optimal sampling ratios.}
   \label{fig: us_opt_samp}
\end{figure}

\section{Additional empirical results for data prioritization} 
\subsection{Importance of subsampling for DUC estimation}\label{app: acs}
We first show an example of DUC estimation from 1 trial (no subsampling) in Figure \ref{fig:ACS_onetrial} below.  
\begin{figure}[H]
    \centering
    \includegraphics[width=0.85\linewidth]{ACS_one_trial.png}
    \caption{Randomly sampled trial illustrating data prioritization method from 10 state datasets.  }
    \label{fig:ACS_onetrial}
\end{figure}
We now use subsampling and average over 100 trials, resulting in a correlation of $-0.94$ between DUC and average ranking. 
\begin{figure}[H]
    \centering
    \includegraphics[width=0.85\linewidth]{acs_100.png}
    \caption{Average ranking, from lowest to highest test MSE, using Random Forest trained on California ($n_\text{CA} =30$) plus one source state ($n_\text{source} =1000$) over 100 trials. Compared to more subsampling as in Fig \ref{fig: pred_CA}, DUC estimation exhibits weaker correlation with ranking.}
    \label{fig:acs_100}
\end{figure}
Increasing to 1,000 trials, we get Figure \ref{fig: pred_CA} from Section \ref{sec: acs_income}, giving us a stronger correlation of $-0.96$ between DUC and average ranking, illustrating how subsampling and averaging estimates over trials can improve DUC estimation. 

We now demonstrate a second scenario where subsampling can improve estimation: when you have a poor estimate of the target population covariate means. We mimic this scenario by estimating $E_1[X]$ using only $50$ samples in the below experiment. In general, higher number of trials improve DUC estimation compared to using only 100 trials.    
\begin{table}[ht!]
\centering
\caption{Correlation between the average estimated DUC and the average rank of test MSE across different trial counts ($n_{\text{sample}} = 50$). }
\label{tab:duc_corr_trials}
\begin{tabular}{@{}lccccc@{}}
\toprule
\textbf{Number of Trials} & \textbf{100} & \textbf{200} & \textbf{300} & \textbf{400} & \textbf{500} \\ 
\midrule
Correlation & $-0.919$ & $-0.936$ & $-0.947$ & $-0.949$ & $-0.938$ \\
\bottomrule
\end{tabular}
\end{table}

\subsection{NHANES}\label{app: nhanes}

\begin{table}[H]
\centering
\begin{threeparttable}
\caption{Composition of candidate source datasets for 2017 cholesterol prediction}
\label{tab:nhanes_sources}
\begin{tabular}{llll}
\hline
\textbf{Dataset} & \textbf{\texttt{nhanes\_year}} & \textbf{\texttt{RIAGENDR}} & \textbf{\texttt{RIDRETH\_merged}} \\
\hline
 1 & $2013, 2015$ & 0      & 1 \\
 2 & $1999, 2001, 2003, 2015$ & All & All \\
 3 & $2013, 2015$ & 1      & 1 \\
 4 & $2011$       & 1      & All \\
 5 & \texttt{RIAGENDR} = 0: $1999, 2015$; \texttt{RIAGENDR} = 1: $2000, 2015$ & 0 / 1 & 0 \\
 6 & $1999, 2005, 2013$ & All & All \\
\hline
\end{tabular}
\begin{tablenotes}[flushleft]
\footnotesize
\item \texttt{nhanes\_year}: survey year.
\item \texttt{RIAGENDR}: gender of the participant.
\item \textbf{\texttt{RIDRETH\_merged}}: derived feature using \texttt{RIDRETH3} (Recode of reported race and Hispanic origin information, with Non-Hispanic Asian Category) where available, otherwise \texttt{RIDRETH1} (Recode of reported race and Hispanic origin information). See \texttt{tableshift} \citep{gardner2023tableshift} for recoding details.
\end{tablenotes}
\end{threeparttable}
\end{table}

\begin{figure}[H]
    \centering
    \includegraphics[width=0.85\linewidth]{fig_nhanes_mse.png}
    \caption{Average test MSE of optimally weighted samples from 2017 ($n_{\text{2017}}=30$) and a source dataset ($n_{\text{source}}=1000$) over 1000 trials (baseline average test MSE of $1.20$ with only $n_{\text{2017}}=30$ target samples). Higher $\hat{\rho}^2$ (DUC) correlates more strongly with lower test MSE compared to the KL divergence and domain classifier score.}
\end{figure}

\end{document}